\newcommand{\tl}[1]{\tilde{#1}}
\newcommand{\lift}[1]{\uparrow_{#1}}
\newcommand{\down}[1]{\downarrow_{#1}}
\newcommand{\aaa}[1]{\left \langle#1\right\rangle}
\newcommand{\pp}[2]{\frac{\partial #1}{\partial #2}}
\newcommand{\bpp}[2]{\frac{\boldsymbol{\partial} #1}{\boldsymbol{\partial} #2}}
\newcommand{\dd}[2]{\frac{\mathrm{d} #1}{\mathrm {d} #2}}
\newcommand{\conn}[1]{\overset{#1}{\nabla}}
\newcommand{\Conn}[3]{\overset{#1}{\nabla}_{#2}{#3}}
\newcommand{\bz}{\boldsymbol{z}}
\newcommand{\bv}{\boldsymbol{v}}
\newcommand{\bGamma}{\boldsymbol{\Gamma}}
\newcommand{\bZ}{\boldsymbol{Z}}
\newcommand{\bV}{\boldsymbol{V}}
\newcommand{\bU}{\boldsymbol{U}}
\newcommand{\bW}{\boldsymbol{W}}
\newcommand{\bw}{\boldsymbol{w}}
\newcommand{\bY}{\boldsymbol{Y}}
\newcommand{\bF}{\boldsymbol{F}}
\newcommand{\bH}{\boldsymbol{H}}
\newcommand{\blamb}{\boldsymbol{\lambda}}
\newcommand{\bX}{\boldsymbol{X}}
\newcommand{\bzero}{\boldsymbol{0}}
\newcommand{\bq}{\boldsymbol{q}}
\newcommand{\bh}{\boldsymbol{h}}
\newcommand{\bdd}{\boldsymbol{d}}
\newcommand{\om}{\omega}
\newcommand{\mbG}{\mathbb{G}}
\newcommand{\mbF}{\mathbb{F}}
\newcommand{\mbA}{\mathbb{A}}
\newcommand{\mbS}{\mathbb{S}}
\newcommand{\mbR}{\mathbb{R}}
\newcommand{\mbO}{\mathbb{O}}
\newcommand{\mbQ}{\mathbb{Q}}
\newcommand{\mbI}{\mathbb{I}}
\newcommand{\mcD}{\mathcal{D}}
\newcommand{\mcP}{\mathcal{P}}
\newcommand{\mcQ}{\mathcal{Q}}
\newcommand{\mcA}{\mathcal{A}}
\newcommand{\mcR}{\mathcal{R}}
\newcommand{\mcH}{\mathcal{H}}
\newcommand{\mcI}{\mathcal{I}}
\newcommand{\mcX}{\mathcal{X}}
\newcommand{\mcL}{\mathcal{L}}
\newcommand{\mcV}{\mathcal{V}}
\newcommand{\eps}{\varepsilon}
\newcommand{\etc}{\textit{etc}}
\newcommand{\ie}{\textit{i.e.}}
\newtheorem{theorem}{Theorem} 
\newtheorem{corollary}{Corollary}[theorem]
\newtheorem{lemma}{Lemma}
\newtheorem{remark}{Remark}
\newtheorem{prop}{Proposition}
\newcommand{\rem}[1]{}
\newcommand{\mbid}{\mbox{id}}
\newcommand{\ver}{\mbox{ver}}
\newcommand{\hor}{\mbox{hor}}
\newcommand{\sspan}{\mbox{span}}
\begin{document}

\title[Affine Connection Approach to the Realization of Nonholonomic Constraints by Strong Friction Forces]{Affine Connection Approach to the Realization of Nonholonomic Constraints by Strong Friction Forces}

%%=============================================================%%
%% GivenName	-> \fnm{Joergen W.}
%% Particle	-> \spfx{van der} -> surname prefix
%% FamilyName	-> \sur{Ploeg}
%% Suffix	-> \sfx{IV}
%% \author*[1,2]{\fnm{Joergen W.} \spfx{van der} \sur{Ploeg} 
%%  \sfx{IV}}\email{iauthor@gmail.com}
%%=============================================================%%

\author{\fnm{Vaughn} \sur{Gzenda}}\email{vaughngzenda@cmail.carleton.ca}

\author*{\fnm{Robin} \sur{Chhabra*}}\email{robin.chhabra@carleton.ca}

\affil{\orgdiv{Department of Mechanical and Aerospace Engineering}, \orgname{Carleton University}, \orgaddress{\street{1125 Colonel By Dr}, \city{Ottawa}, \postcode{K1S 5B6}, \state{Ontario}, \country{Canada}}}

%%==================================%%
%% Sample for unstructured abstract %%
%%==================================%%

\abstract{   In this paper, we study an affine connection approach to realizing nonholonomic mechanical systems mediated by viscous friction forces with large coefficients, viewed as a singular perturbation of the nonholonomic system. We show that the associated slow manifold is represented coordinate-free as the image of a section over the nonholonomic distribution. We propose a novel invariance condition based on covariant derivatives and prove that this condition is equivalent to the classical invariance condition based on time derivatives. Accordingly, we propose a novel recursive procedure to approximate the slow manifold based on the covariant derivatives of a formal power series expansion of the section. Using this recurrence relation, we derive, up to second order, approximations of the slip velocities residing in the slow manifold, as well as the associated approximated dynamics up to first order. Lastly, we illustrate our approach with a case study of a vertical rolling disk. }

\keywords{Nonholonomic mechanics, Perturbation methods, Reduced-order modelling, Riemannian Geometry}

%%\pacs[JEL Classification]{D8, H51}

%%\pacs[MSC Classification]{35A01, 65L10, 65L12, 65L20, 65L70}

\maketitle
\textbf{Submitted to the Journal of Nonlinear Dynamics}

\section{Introduction}
Rolling without slipping is a constraint that appears in many mechanical systems, such as vehicle systems, and is commonly viewed as an example of an ideal nonholonomic constraint \cite{bloch2004nonholonomic}. Nonholonomic constraints normally refer to point-wise linear constraints on the velocity space at each configuration, which cannot be described by a derivative of a set of constraints on the configuration space \cite{bloch2004nonholonomic}. Hence, such constraints are then viewed as a nonintegrable linear distribution of the tangent bundle to the configuration manifold \cite{bloch2004nonholonomic}. However, the nonholonomic constraint forces are only valid for velocities in the distribution. In practice, nonholonomic constraints in vehicle systems are often violated, resulting in an accumulation of errors for standard nonholonomic model-based control laws. To model the velocities violating the nonholonomic constraints, the ideal nonholonomic constraint forces may be modelled as dissipative friction, whose limit at large coefficients approximates the ideal nonholonomic system \cite{eldering2016realizing}. The large friction coefficients introduce a steep linear relationship between the friction force and the motion of the contact point that is analogous to the linear regime of Bakker-Pacejka magic formula \cite{pacejka1992magic}. Mathematically, the system with large coefficients may be treated as a slow/fast singular perturbation where the dissipative friction forces act on a much faster time scale than the nominal variables resulting in an exponentially attractive invariant manifold. This manifold -- often called the slow manifold -- physically represents small violations of the nonholonomic constraints. 

Nonholonomic mechanical systems form a class of constrained systems whose velocity constraints have a natural differential geometric structure  \cite{bloch2004nonholonomic,bates1993nonholonomic,bloch2009quasivelocities,bloch2010nonholonomic}. Classically, the equations of motion are derived using Lagrange-d'Alembert's (L-d'A) variational principle -- which is an adaptation of Hamilton's variational principle --  whereby the variations are constrained to the nonholonomic distribution and enforced with Lagrange multipliers that do no work on the distribution \cite{bloch2004nonholonomic}. Some examples of nonholonomic constraints include rolling without slipping, no sliding, and conserved angular momenta, among
others \cite{bloch2004nonholonomic,bullo2019geometric}. The kinetic energy of the mechanical system may be interpreted geometrically as a Riemannian metric tensor \cite{petersen2006riemannian}. Equivalent to the Lagrange-d'Alembert principle, Lewis and Bullo \cite{lewis1998affine,lewis2000simple} use Riemannian geometry to model the nonholonomic dynamics with affine connections. An affine connection is a geometric object that enables a notion of differentiation of vector fields on a smooth manifold. A unique affine connection is derived from the metric tensor called the Levi-Civita affine connection that encodes Coriolis, centrifugal, and centripetal accelerations. In \cite{bullo2019geometric}, the nonholonomic constraints may be viewed as the kernel of a projection map onto the complement of the nonholonomic distribution. In the affine connection framework, \cite{bullo2019geometric}, the nonholonomic constraints are then enforced using Lagrange multipliers that are solved for using a covariant derivative of the complementary projection map. As such, Lewis introduces an affine connection called the nonholonomic connection, based on the Levi-Civita connection and the covariant of the orthogonal projection map, modelling the necessary accelerations required to maintain the nonholonomic constraints. The geometric interplay of affine connections and nonintegrable distributions is studied in detail in \cite{lewis1998affine}. Other equivalent approaches to nonholonomic mechanics include symplectic geometry \cite{bates1993nonholonomic,chhabra2014nonholonomic}, Poisson geometry \cite{van1994hamiltonian}, Gauss's Principle of Least Constraint \cite{bloch2004nonholonomic,bullo2019geometric} and Kane's equations \cite{kane1985dynamics}.

Singularly perturbed systems (SPS) manifest in various domains of science and engineering, emerging when a dynamical system relies on scalable parasitic parameters like small time constants, masses, friction, \etc. Accordingly, SPS have natural applications in electric circuits \cite{sannuti1981singular}, chemistry \cite{zagaris2004analysis}, neuroscience \cite{rubin2002geometric}, among others. Slow/fast systems are a particular case of a SPS that involve two distinct time scales, delineated by a singular parameter $\epsilon$. Tikhonov established that when the spectra of the linearized fast dynamics lies in the left half-plane, the flow converges to an invariant (slow) manifold as $\epsilon$ tends to 0 \cite{tikhonov1952systems}. The functional relationship between the slow and fast variables is dictated by an \emph{invariance condition} and a formal power series expansion is a common approach to approximate the slow manifold \cite{verhulst2005methods}. Fenichel examined the persistence of the invariant manifold under small perturbations of $\eps$, utilizing the concept of Normally Hyperbolic Invariant Manifold (NHIM) and under the assumption of compactness in the state space \cite{fenichel1971persistence}. In \cite{fenichel1979geometric}, Fenichel utilizes a geometric approach to investigate asymptotic expansions of the graph representation of the invariant manifold. In the Riemannian geometric context, Eldering extends Fenichel's findings to non-compact state spaces, assuming regularity under bounded geometry conditions \cite{eldering2013normally}. Eldering's work opens up a pathway for the coordinate-independent examination of such systems. If a mechanical system possesses a noncompact configuration space $Q$ and is endowed with a symmetry group $G$, where the group action leads to a compact \emph{shape space} $Q/G$ \cite{marsden1998introduction}, then it follows that $Q$ has bounded geometry \cite{eldering2013normally,eldering2016realizing}. In practical applications, numerous mechanical systems exhibit symmetry groups, as seen in mobile vehicles that yield compact shape spaces \cite{chhabra2014nonholonomic}. Hence, Eldering's geometric singular perturbation techniques \cite{eldering2013normally} can be employed for the analysis of such systems.

The physical realization of nonholonomic mechanical systems by strong viscous friction forces was treated classically by Brendelev and Karapetian who mainly focused on convergence to the ideal nonholonomic dynamics \cite{brendelev1981realization, karapetian1981realizing}. More recently, Eldering used the geometric singular perturbation theory of NHIM's \cite{eldering2013normally} to generalize these convergence results for systems with bounded geometry. Common to all these models, the contact forces are modelled as a Rayleigh dissipation function (RDF) defined by a symmetric $(0,2)$-tensor. Rayleigh dissipation functions
have been used to model surface contact forces following first physical principles \cite{minguzzi2015rayleigh}. To realize the nonholonomic constraints, the RDF is constructed such that its kernel is identified with the nonholonomic distribution ensuring that the contact forces do no work on the distribution, mimicking the L-d'A principle, yet the dynamics are exponentially attractive to the distribution. At the limit of large friction coefficients, Eldering proves the existence of a persistent exponentially attractive NHIM and shows that in a nonstandard frame spanning the distribution and its complement, the slow manifold may be represented as the graph over the nonholonomic variables \cite{eldering2016realizing}. It is shown that this graph may be represented as a formal power series, which is used to approximate the slow manifold up to first order \cite{eldering2016realizing} based on the classical invariance condition.

Antali and Stepan considered the physical realization of the rolling phenomenon of wheels by dry friction forces \cite{antali2019nonsmooth}. Their analysis is based on the crossing dynamics of extended Filippov systems \cite{antali2018sliding}. A Filippov system is described by a piece-wise smooth vector field with discontinuities on a \emph{discontinuity manifold} \cite{filippov2013differential}. Extended Filippov (EF) systems are Filippov systems with codimension-2 discontinuity manifolds and describe Coulomb rigid body dynamics \cite{antali2018sliding}. The EF approach has been used to analyze the spinning and sliding behavior of rigid bodies under contact \cite{antali2022nonsmooth}. Recently, bifurcation analysis of EF systems has also been developed to analyze slip and sliding dynamics of a wheel motion \cite{antali2023bifurcations}. In contrast to the slow-fast realization by viscous friction forces \cite{eldering2016realizing}, the Coulomb friction in \cite{antali2019nonsmooth} is both nonsmooth and nonlinear in velocities. The wheel rolling phenomenon has also been captured as a limit of piece-wise continuous holonomic constraints \cite{ruina1998nonholonomic}. Other historical approaches to the realization of nonholonomic constraints include coupling of an external force to generate nonholonomic dynamics \cite{bloch2008quantization}.

In this paper, we study realizations of nonholonomic systems as slow/fast systems using an affine connection approach inspired by the work of Bullo and Lewis \cite{bullo2019geometric,lewis1998affine,lewis2000simple}. In our geometric approach, we represent the slow manifold coordinate-independent as the image of a section over the nonholonomic distribution. We introduce a novel invariance based on a covariant derivative along the integral curves on the slow manifold. We prove that this novel invariance condition is equivalent to the invariance condition based on time derivatives \cite{eldering2016realizing}. We propose a recursion to approximate the velocities in the slow manifold based on covariant derivatives of geometric quantities. Using the recursion, we derive approximations of equations of motion. Lastly, we study an example system of a vertical rolling disk. The major contributions of this work are summarized as follows:
\begin{enumerate}
    \item formulating a coordinate-free representation of the slow manifold as the image of a nonlinear section over the nonholonomic distribution, 
    \item formulating a novel invariance condition based on covariant derivatives, and proving its equivalence to the classical invariance condition, 
    \item proposing a novel recurrence relation to approximate the slip velocities described by the slow manifold,
    \item using the recurrence relation to derive higher-order perturbations of the equations of motion.
\end{enumerate}

The paper is organized as follows. Section \ref{sec_preliminaries} provides a self-contained summary of the differential geometric structures used in the remainder of the article. This includes a review of Riemannian geometry, affine connection systems, nonholonomic affine connection systems, and the geometry of the second tangent bundle including the procedure to take covariant derivatives of general (nonlinear) tangent bundle maps. In Section \ref{sec_affine_connection_systems}, we derive the equations of motion for a slow fast realization of nonholonomic systems and represent the slow manifold as the image of a section. Using this model, in Section \ref{sec_approximations_slip}, we derive the relationship between the reaction forces and the slow manifold and calculate zeroth, first and second order approximations of the slip velocities, and zeroth and first-order dynamical equations of motion. Lastly, in Section \ref{sec_case_study} we consider our approach with a case study of a vertical rolling disk.

%============================================================
%          Preliminaries 

%============================================================
\section{Preliminaries} \label{sec_preliminaries}
This section provides a self-contained review of mathematical notions used throughout the paper. For references on Riemannian geometry, geometric mechanics and related topics see \cite{bloch2004nonholonomic,bullo2019geometric,conlon2001differentiable,lewis2020nonholonomic,lewis2000simple,petersen2006riemannian} and references therein. 
\subsection{Riemannian Geometry and Affine Connection Systems}

Let $Q$ be an $n$-dimensional smooth manifold representing the configuration space of a simple mechanical system. Let $TQ = \bigsqcup_{q\in Q}T_qQ $ and $T^*Q = \bigsqcup_{q\in Q}T_q^*Q$ be the disjoint union of tangent and cotangent spaces called the tangent and cotangent bundle, respectively. Let $\pi_{TQ}:TQ \rightarrow Q$ be the bundle projection map for $TQ$ and let $\mcX TQ$ be the set be the set of vector fields on $Q$. In local coordinates $\left(U,\bq = (q^1,\dots,q^n)\right)$, we denote the coordinate induced basis by $\{\bpp{}{q^1},\dots, \bpp{}{q^n}\}$ for $T_{\bq}Q$ and $\{\bdd q^{1},\dots,\bdd q^{n}\}$ for $T^{*}_{\bq}Q$. The natural pairing $\aaa{.|.}:T^{*}_{\bq}Q\times T_{\bq}Q \rightarrow \mbR$ between vectors and covectors in the tangent and cotangent spaces, respectively, is defined by 
\begin{align}
    \aaa{\om_{\bq}|\bX_{\bq}} = \sum^{n}_{i=1}\om_{i} X^{i}, \quad \om_{\bq} \in T^{*}_{\bq }Q, \ \bX_{\bq} \in T_{\bq}Q.
\end{align} where $\bX_{\bq} = \sum_{i}X^{i}\bpp{}{q^{i}}$ and $\om_{\bq} = \sum_{i}\om_{i}\bdd q^{i}$ in the canonical basis of the tangent and cotangent bundles, having components $X^1,\dots,X^n \in \mbR$ and $\om_1,\dots,\om_n \in \mbR$, respectively.
 
Let $\mbG_{\bq}:T_{\bq}Q \times T_{\bq}Q \rightarrow \mbR$ be a Riemannian metric on $Q$ defining the kinetic energy of the system \cite{petersen2006riemannian}. For each $\bq \in Q$ the metric $\mbG$ induces two isomorphisms between the tangent and cotangent bundles given point-wise by 
\begin{align}
    &\begin{aligned}
    &\sharp_{\mbG}:T^{*}_{\bq}Q\rightarrow T_{\bq}Q\\
    &\om_{\bq} \mapsto \mbG(\om_{\bq}^{\sharp_{\mbG}},\bY_{\bq}) = \aaa{\om_{\bq}|\bY_{\bq}},\quad \forall \bY_{\bq} \in T_{\bq}Q
    \end{aligned}\\
    &\begin{aligned}
    &\flat_{\mbG}:T_{\bq}Q\rightarrow T^{*}_{\bq}Q\\
    &\bX_{\bq} \mapsto \aaa{\bX_{q}^{\flat_{\mbG}}|\bY_{\bq}} = \mbG_{\bq}(\bX_{\bq},\bY_{\bq}),\quad \forall \bY_{\bq} \in T_{\bq}Q
    \end{aligned}
\end{align} called the sharp and flat isomorphisms, respectively \cite{bullo2019geometric,petersen2006riemannian}. In the canonical local coordinates for $TQ$ and $T^{*}Q$, the metric tensor takes the matrix form such that by abuse of notation we denote the matrix by $\mbG_{\bq}$. As the result, the sharp and flat isomorphisms become $\om_{\bq}^{\sharp_{\mbG}} = \mbG_{\bq}^{-1}\om_{\bq}$ and $\bX^{\flat_{\mbG}}_{\bq} = \mbG_{\bq}\bX_{\bq}$, where $\om_{\bq}$ and $\bX_{\bq}$ are expressed in the canonical bases. 

The Levi-Civita connection $\Conn{}{}{}$, is the unique affine connection associated with $\mbG$ and is derived from the assumption that the connection is divergentless and torsion-free \cite{bullo2019geometric, petersen2006riemannian}. The $\nabla_{\bX}\bY$ be the covariant derivative of a vector field $\bY$ along a vector field $\bX$ with respect to $\nabla$ is given by 
\begin{align}\label{covariant_d_vector}
    \Conn{}{\bX}{\bY} = \left(\pp{Y^i}{q^j}X^j + \Gamma^i_{jk}X^jY^k \right)\bpp{}{q^i},
\end{align} where the Levi-Civita connection coefficients -- also called the Chistoffel symbols-- $\Gamma^i_{jk}$ in the coordinate-induced basis is expressed as 
\begin{align}
    \Gamma^i_{jk} = \frac{1}{2}\mbG^{il}\left(\pp{\mbG_{jl}}{q^k} + \pp{\mbG_{kl}}{q^j} - \pp{\mbG_{jk}}{q^l} \right).
\end{align} The Christoffel symbols have the physical meaning for a simple mechanical system of Coriolis, centripetal and centrifugal accelerations \cite{bullo2019geometric}. The action of $\Conn{}{}{}$ on $(1,1)$-tensors $\mcA \in \mcX(T^{*}Q \otimes TQ)$ (such as a projection map) is given by
\begin{align}\label{covariant_d_tensor}
    (\Conn{}{\bX}{\mcA}) = \left(\pp{\mcA^i_j}{q^k}X^k + \Gamma^i_{kl}\mcA^l_jX^k - \Gamma^l_{kj}\mcA^i_lX^k\right)\bpp{}{q^{i}}\otimes \bdd q^{i}.
\end{align}  We may consider the connection coefficients as the point-wise bilinear map $\Gamma:T_{\bq}Q\times T_{\bq}Q \rightarrow T_{\bq}Q$ as the function $(\bv_{\bq},\bw_{\bq})\mapsto\Gamma(\bq,\bv_{\bq})\bw_{\bq}$. 

The Lagrangian $L:TQ \rightarrow \mbR$ for a simple mechanical system is defined as 
\begin{align}
    L = \frac{1}{2}\mbG_{\bq}(\bv_{\bq},\bv_{\bq}) - V(\bq).
\end{align} The Euler-Lagrange equations of motion for Lagrangian $L$ are equivalent to forced affine connection system $(\nabla,\mbG,V,F)$ 
\begin{align} \label{EL_affine_dynamics}
    \Conn{}{\bv(t)}{\bv(t)} = \left(-\bdd V + F\right)^{\sharp_\mbG}, \quad \dd{}{t}\bq(t) = \bv(t).
\end{align} defined by the connection and the gradient of the potential and external force $F \in T^{*}Q$ \cite{bullo2019geometric}.

%---------------------------------------------------
\subsubsection{Nonholonomic Affine Connection System}

We review the basic results of nonholonomic affine connection systems treated. For general references, see \cite{bullo2019geometric,lewis1998affine,lewis2000simple}.

Let $\{a^{l}\subset T^{*}Q\ | \ l=1,\dots, m\}$ be a collection of everywhere linearly independent one-forms over $Q$ such that they specify an $(n-m)$-dimensional nonholonomic distribution $\mcD$ at each point $\bq \in Q$ by 
\begin{align}
\label{nonholonomic_constraints}
    \mcD_{\bq} = \left\{\bv_{\bq} \in T_{\bq}Q \ | \ \aaa{a^{l}_{\bq}|\bv_{\bq}} = 0,\quad l =1,\dots,m \right\}.
\end{align} The constraint one-forms are locally given by $a^{l}_{\bq} = \sum_{i= 1}^{n}\mbA^{l}_{i}(\bq)\bdd q^{i}$ for each $l = 1,\dots,m$. A basis for $\mcD_{\bq}$ is chosen by selecting $m$ everywhere linearly independent vector fields, $\{\mbS_{1}(\bq),\dots,\mbS_{n-m}(\bq)\}$ satisfying the constraint equation \eqref{nonholonomic_constraints}. Using the Riemannian metric we complete the basis for $T_{\bq}Q$ by defining 
\begin{align}
    \mcD^{\bot}_{\bq} := \sspan\{(a^1_{\bq})^{\sharp_\mbG},\dots, (a^{p}_{\bq})^{\sharp_\mbG}\}.
\end{align} 
Therefore, each tangent space can be $\mbG$-orthogonally decomposed into $T_{\bq}Q \cong \mcD_{\bq}\oplus \mcD^{\bot}_{\bq}$. Define the projection maps $\mcP_{\bq}:T_{\bq}Q \rightarrow \mcD_{\bq}$ and $\mcP^{\bot}_{\bq}:T_{\bq}Q \rightarrow \mcD^{\bot}_{\bq}$ at each point $\bq \in Q$. The nonholonomic distribution $\mcD$ can be described in terms of the projection maps as
\begin{align}
    \mcD &= \{\bw \in TQ \, | \, \bw = \mcP(\bv), \, \bv \in TQ\} = \{\bw \in TQ \, | \, \mcP^{\bot}(\bw) = \bzero \}.
\end{align}
 To represent the evolution of the trajectory of the system $\bq:[a,b]\rightarrow Q$, we form the nonholonomic affine connection system:
\begin{align}
\label{nonholonomic_affine:eq1}
    & \conn{}_{\bv(t)}\bv(t) = \blamb(t) -\left(\bdd V \right)^{\sharp_\mbG}, \\
    & \bv \in \mcD \iff \mcP^\bot(\bv(t)) = \bzero, \label{nonholonomic_affine:eq2}\\
    & \dd{}{t}\bq(t) = \bv(t), \label{nonholonomic_affine:eq3}
\end{align} where $\blamb$ is a section in $\mcD^\bot$ representing the Lagrange multipliers representing the constraint reaction forces given by
\begin{align}
\label{lagrange_mult_nonholonomic}
    & \blamb(t) = - \left(\conn{}_{\bv}\mcP^\bot\right)(\bv) + \mcP^{\bot}\left(\bdd V \right)^{\sharp_\mbG}. 
\end{align} 

\begin{remark}
    The nonholonomic affine connection \cite{bullo2019geometric,lewis2000simple,lewis2020nonholonomic} is obtained by substituting the Lagrange multipliers \eqref{lagrange_mult_nonholonomic} into the \eqref{nonholonomic_affine:eq1} and restricting the equations of motion \eqref{nonholonomic_affine:eq1}-\eqref{nonholonomic_affine:eq3} to the distribution $\mcD$.
\end{remark}

%----------------------------------------------------
\subsection{Geometry of the Second-Tangent Bundle}
In this section, we review the geometry of the second tangent bundle $TTQ$. First, we review the bundle structure and decomposition \cite{abraham2008foundations,ryan2014geometry} which is required to compute the covariant derivative of a general nonlinear vector bundle map. The covariant derivative of a general nonlinear vector bundle map is a prerequisite for the main results of this article. It is noteworthy that this specific result isn't explicitly stated in the literature. Lastly, we review the second tangent bundle in the context of geometric mechanics \cite{abraham2008foundations}. For general references, see \cite{abraham2008foundations,bullo2019geometric,frankel2011geometry,marsden1998introduction,yano1973differential}. 

\subsubsection{Vertical/Horizontal Decomposition of $TTQ$}
Consider the \emph{second tangent bundle} of $Q$ with bundle structure given by $\pi_{TTQ}:TTQ \rightarrow TQ$. Local coordinates for $T_{(\bq,\bv_{\bq})}TQ$ by $\left(\bq,\bv_{\bq},\bV_{(\bq,\bv_{\bq})},\bW_{(\bq,\bv_{\bq})} \right)$ where $(\bV_{(\bq,\bv_{\bq})},\bW_{(\bq,\bv_{\bq})})$ given in the canonical coordinate induced basis by
\begin{align}
    \small\bV_{(\bq,\bv_{\bq})} = \sum_{i=1}^{n}V^{i}\pp{}{\tl{q}^i}\Big |_{(\bq,\bv_{\bq})}, \quad \bW_{(\bq,\bv_{\bq})} = \sum_{j=1}^{n}W^{j}\pp{}{v^{j}}\Big |_{(\bq,\bv_{\bq})}.
\end{align} where $\tl{q}^{i} :=\pi^{*}_{TQ}q^{i}$. The push-forward of the projection map $\pi_{TQ}$ induces a bundle structure $\left(\pi_{TQ}\right)_{*}:TTQ \rightarrow TQ$. In these coordinates, the bundle projection maps satisfy 
\begin{align}
    &\pi_{TTQ}: \left(\bq,\bv,\bV,\bW \right) \mapsto \left(\bq,\bv \right)\\
    &\left(\pi_{TQ} \right)_{*}: \left(\bq,\bv,\bV,\bW  \right) \mapsto \left(\bq,\bV \right).
\end{align} The vertical bundle is defined by $\mcV TQ = \ker\left(\pi_{TQ} \right)_{*}$ and given locally by 
\begin{align} \notag
    \small\mcV TQ &= \left\{\left(\bq,\bv,\bV,\bW \right) \in TTQ \ | \ \left(\pi_{TQ} \right)_{*} \left(\bq,\bv,\bV,\bW \right) = \bzero \right\}\\
    &= \sspan\left\{\bpp{}{v^i}\Big |_{(\bq,\bv_{\bq})} \, | \, i = 1,\dots,n \right\}.
    % &= \{\left(q,\bu,\bv,\ba \right) \in TTU \ | \ \bv = \bzero \}.
\end{align} 
The parallel transport equation induces a vector-valued connection one form $\omega\in \Omega^{1}\left(TQ;\text{End}(\mbR^{n})\right)$ given by (see section 9.7 of \cite{frankel2011geometry})
\begin{align}
\label{Hor_condition}
    \omega := \bdd v^k + \Gamma^k_{ij}v^j\bdd x^i.
\end{align} Define the \emph{horizontal distribution} $\mcH TQ = \ker(\omega) \subset TTQ$  \cite{frankel2011geometry} which is given locally by 
\begin{align}
    &\mcH TQ = \underset{i =1,\dots,n}{\sspan}\left\{\bpp{}{\tl{q}^{i}}\Big |_{(\bq,\bv_{\bq})} - \Gamma^{k}_{ij}v^{j}\bpp{}{v^{k}}\Big |_{(\bq,\bv_{\bq})} \right\}.
\end{align} Therefore, the second tangent bundle is decomposed as $TTQ \cong \mcH TQ\cong \mcV TQ$. Define the vertical lift \cite{lewis2000simple} of a vector field $\bX\in TQ$ to the vertical space $\mcV_{(\bq,\bv_{\bq})}TQ$
\begin{align} \label{vert_lift}\notag
    &\lift{\ver}:\ TQ \rightarrow \mcV_{(\bq,\bv_{\bq})}TQ\\
    & \bX \mapsto \dd{}{t}\left(\bv_{\bq} + t\bX \right)\Big |_{t=0} .
\end{align} In coordinates, the vertical lift is given by $\bX^{\lift{\ver}} = X^i\bpp{}{v^i}$. The vertical lift is a smooth bijection and hence provides an isomorphism between the $\mcV_{(\bq,\bv_{\bq})}TQ$ and $TQ$ given by $\down{\ver}:= \left(\lift{\ver} \right)^{-1}$
\begin{align} \label{vert_drop} \notag
    &\down{\ver}:\mcV_{(\bq,\bv_{\bq})}TQ \rightarrow  TQ\\
    &\left(\bq,\bv_{\bq},\bzero,\bW \right) \mapsto \left(\bq,\bW \right). 
\end{align} Similarly, we define the \emph{horizontal lift} \cite{yano1973differential} with respect to the affine connection $\Conn{}{}{}$ with Christoffel Symbols $\Gamma$ by
\begin{align} \label{hor_lift} \notag
    &\lift{\hor}: T_{q}Q\times TQ \rightarrow \mcH_{(\bq,\bv_{\bq})}TQ\\
    &\left(\bq,\bv_{\bq},\bX \right) \mapsto \left(\bq,\bv_{\bq},\bX,- \Gamma(\bq,\bv_{\bq})\bX \right).
\end{align} and its inverse $\down{\hor} := \left(\lift{\hor} \right)^{-1}$ given by the map
\begin{align} \label{hor_drop} \notag
    &\down{\hor}: \ \mcH_{(\bq,\bv_{\bq})}TQ \rightarrow TQ\\
    &\left(\bq,\bv_{\bq},\bX, - \Gamma(\bq,\bv_{\bq})\bX\right) \mapsto \left(\bq,\bX \right).
\end{align} Lastly, we define projection maps from the tangent of the tangent bundle onto its vertical and horizontal subbundles. The projection onto the vertical bundle is defined intrinsically by $\mcQ^{\mcV}_{\Gamma} = id_{\mcV TQ}\oplus \Gamma$ and given locally by 
\begin{align} \label{vert_proj} \notag
    &\mcQ^{\mcV}_{\Gamma}: TTQ \rightarrow \mcV TQ\\
    &\left(\bq,\bv_{\bq},\bV,\bW \right) \mapsto \left(\bq,\bv_{\bq},\bzero, \bW + \Gamma(\bq,\bv_{\bq})\bV  \right).
\end{align} From the projection property, the projection onto the horizontal space $\mcH TQ $ is defined intrinsically as $\mcQ_{\Gamma}^{\mcH}:= id_{TTQ} - \mcQ_{\Gamma}^{\mcV}$ and given locally by 
\begin{align} \label{hor_proj} \notag
    &\mcQ_{\Gamma}^{\mcH}: TTQ\rightarrow \mcH_{(\bq,\bv_{\bq})}TQ\\
    & \left(\bq,\bv_{\bq},\bV,\bW \right) \rightarrow \left(\bq,\bv_{\bq},\bV,-\Gamma(\bq,\bv_{\bq})\bV \right).
\end{align} From these projection maps, we define projection from $TTQ$ to two copies of $TQ$, each of which is isomorphic to $\mcV TQ$ and $\mcH TQ$, respectively. These projections are defined in the following diagram:
\begin{center}
    % https://q.uiver.app/?q=WzAsNixbMCwwLCJUVFEiXSxbMiwwLCJcXG1hdGhjYWx7Vn0oVFEpIl0sWzQsMCwiVFEiXSxbMCwxLCJUVFEiXSxbMiwxLCJcXG1hdGhjYWx7SH0oVFEpIl0sWzQsMSwiVFEiXSxbMCwxLCJcXG1hdGhjYWx7UH1ee1xcbWF0aGNhbHtWfX1fe1xcR2FtbWF9IiwyXSxbMSwyLCJcXGRvd25hcnJvd197XFx0ZXh0e3Zlcn19IiwyXSxbMCwyLCJcXHBpX3tcXG1hdGhjYWx7Vn19IiwwLHsiY3VydmUiOi0zfV0sWzMsNCwiXFxtYXRoY2Fse1B9XntcXG1hdGhjYWx7SH19X3tcXEdhbW1hfSJdLFs0LDUsIlxcZG93bmFycm93X3tcXHRleHR7aG9yfX0iXSxbMyw1LCJcXHBpX3tcXG1hdGhjYWx7SH19IiwyLHsiY3VydmUiOjN9XV0=
\[\begin{tikzcd}
	TTQ && {\mathcal{V}TQ} && TQ \\
	TTQ && {\mathcal{H} TQ} && TQ
	\arrow["{\mathcal{Q}^{\mathcal{V}}_{\Gamma}}"', from=1-1, to=1-3]
	\arrow["{\downarrow_{\text{ver}}}"', from=1-3, to=1-5]
	\arrow["{\mcP_{\mathcal{V}}}", curve={height=-18pt}, from=1-1, to=1-5]
	\arrow["{\mathcal{Q}^{\mathcal{H}}_{\Gamma}}", from=2-1, to=2-3]
	\arrow["{\downarrow_{\text{hor}}}", from=2-3, to=2-5]
	\arrow["{\mcP_{\mathcal{H}}}"', curve={height=18pt}, from=2-1, to=2-5]
\end{tikzcd}\]
\end{center} Therefore, the isomorphism $TTQ \cong \pi_{TQ}^{*}TQ\oplus \pi_{TQ}^{*}TQ$ is established by $\mcP_{\mcH}\oplus\mcP_{\mcV}$. 
\begin{prop}\label{prop_projection}
    The projection maps $\mcP_{\mcH}:TTQ \rightarrow TQ$ and $\mcP_{\mcV}:TTQ \rightarrow TQ$ satisfy the following properties: 
    \begin{itemize}
        \item $\mcP_{\mcH}  = \left(\pi_{TQ}\right)_{*}$
        \item $\mcP_{\mcV}(\bX_{*}) = \Conn{}{\bv}{\bX}$, for a section $\bX:Q \rightarrow TQ$.
    \end{itemize}
\end{prop}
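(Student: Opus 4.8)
The plan is to prove both identities by unwinding the definitions of the composite projections read off from the diagram, namely $\mcP_{\mcH} = \down{\hor}\circ\mcQ^{\mcH}_{\Gamma}$ and $\mcP_{\mcV} = \down{\ver}\circ\mcQ^{\mcV}_{\Gamma}$, and then performing a short computation in the induced coordinates $(\bq,\bv_{\bq},\bV,\bW)$ on $TTQ$. Both statements are pointwise, so it suffices to work in a single chart $(U,\bq)$.

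For the first identity I would simply substitute. Applying $\mcQ^{\mcH}_{\Gamma}$ to $(\bq,\bv_{\bq},\bV,\bW)$ produces the horizontal vector $(\bq,\bv_{\bq},\bV,-\Gamma(\bq,\bv_{\bq})\bV)$, which by the definition of $\down{\hor}$ drops to $(\bq,\bV)$. Since $(\pi_{TQ})_{*}$ sends $(\bq,\bv_{\bq},\bV,\bW)$ to $(\bq,\bV)$ in these coordinates, the two maps agree and $\mcP_{\mcH}=(\pi_{TQ})_{*}$. This step is purely definitional and carries no real difficulty.

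For the second identity the key computation is the coordinate expression of the tangent map $\bX_{*}$ of a section $\bX=X^i(\bq)\bpp{}{q^i}$. Differentiating the curve $t\mapsto(\gamma(t),X(\gamma(t)))$ with $\dot\gamma(0)=\bv$, I expect to obtain $\bX_{*}(\bv)=\left(\bq,X(\bq),\bv,\pp{X^k}{q^i}v^i\right)$; that is, the base-velocity slot is filled by $X(\bq)$, the horizontal fiber component $\bV$ equals $\bv$, and the vertical fiber component $\bW$ has entries $\pp{X^k}{q^i}v^i$. Applying $\mcQ^{\mcV}_{\Gamma}$ then yields the vertical vector with fiber $\bW+\Gamma(\bq,X(\bq))\bv$, and $\down{\ver}$ reads this off as $\left(\pp{X^k}{q^i}v^i+[\Gamma(\bq,X(\bq))\bv]^k\right)\bpp{}{q^k}$. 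Comparing with the coordinate formula \eqref{covariant_d_vector} for $\Conn{}{\bv}{\bX}$, the derivative terms match immediately.

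The one point requiring care — and the main (if modest) obstacle — is the $\Gamma$-term. The vertical projection evaluates $\Gamma$ at the base velocity $X(\bq)$ acting on the fiber $\bV=\bv$, whereas $\Conn{}{\bv}{\bX}$ contains $\Gamma(\bq,\bv)X(\bq)$ with the roles of $\bv$ and $X(\bq)$ interchanged. These coincide precisely because the Levi-Civita connection is torsion-free, so its Christoffel symbols are symmetric in their lower indices, $\Gamma^k_{ij}=\Gamma^k_{ji}$; invoking this symmetry identifies the two $\Gamma$-terms and completes the proof that $\mcP_{\mcV}(\bX_{*})=\Conn{}{\bv}{\bX}$. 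I would close by remarking that this same symmetry is what renders the connection one-form \eqref{Hor_condition} and the bilinear-map convention for $\Gamma$ mutually consistent, so no generality is lost.
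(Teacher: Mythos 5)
Your proposal is correct and takes essentially the same route as the paper's proof: unwind the composites $\mcP_{\mcH}=\down{\hor}\circ\,\mcQ^{\mcH}_{\Gamma}$ and $\mcP_{\mcV}=\down{\ver}\circ\,\mcQ^{\mcV}_{\Gamma}$ in the induced coordinates on $TTQ$, compute $\bX_{*}(\bv)=\left(\bq,\bX_{\bq},\bv,D\bX(\bq)\cdot\bv\right)$, and compare with the coordinate formula \eqref{covariant_d_vector}. If anything you are more careful than the paper, whose final equality $\left(\bq,D\bX(\bq)\cdot\bv_{\bq}+\Gamma(\bq,\bX_{\bq})\cdot\bv_{\bq}\right)=\Conn{}{\bv}{\bX}$ silently uses the torsion-free symmetry $\Gamma^{k}_{ij}=\Gamma^{k}_{ji}$ that you invoke explicitly to swap the arguments of $\Gamma$.
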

\begin{proof}
    See appendix \ref{proof_prop_projection}
\end{proof}
\subsubsection{Covariant Derivative of a Tangent Bundle Map}
In this subsection, we find the covariant derivative of a general (nonlinear) tangent bundle map. This covariant derivative is essential for the results in subsequent sections. The authors have not found this result explicitly stated in the literature. Lastly, we provide a matrix formulation of the equations for practical computations.  

\begin{prop} \label{prop_bundle_map}
Let $\bh:TQ \rightarrow TQ$ be a general (not necessarily linear) vector bundle map over the tangent bundle $\pi_{TQ}:TQ \rightarrow Q$. Let $\bv,\,\bw:Q \rightarrow TQ$ be a sections. Then, the covariant derivative of $\bh\circ\bw$ along $\bv$ is     \begin{align}
    \Conn{}{\bv}{(\bh\circ \bw)} =  \nabla^{\mcH}_{\bv}\bh(\bw) + D^{\mcV}\bh\left(\Conn{}{\bv}{\bw} \right), 
\end{align} where the vertical Jacobian $D^{\mcV}\bh = \mcP_{\mcV}D\bh\big |_{\mcV TQ}$ is given locally by 
\begin{align} \label{vertical_deriv}
    D^{\mcV}\bh(\Conn{}{\bv}{\bw}) = \pp{\bh^m}{w^l}\left(\Conn{}{\bv}{\bw} \right)^l \bpp{}{q^m},
\end{align} and the horizontal connection $\Conn{\mcH}{\bv}\bh(\bw) = \mcP_{\mcV}D\bh\big|_{\mcH TQ}$ is given locally by 
\begin{align}\label{horizontal_deriv}
    \Conn{\mcH}{\bv}{\bh(\bw)} = \left(\pp{\bh^m}{q^{k}}v^k  + \Gamma^{m}_{ki}v^{k}\bh^{i} - \Gamma^{l}_{ki}w^{k}v^{i}\pp{\bh^m}{w^l}\right)\bpp{}{q^m}.
\end{align}
\end{prop}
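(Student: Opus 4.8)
The plan is to reduce everything to Proposition~\ref{prop_projection}, which identifies the covariant derivative of a section with the vertical projection $\mcP_{\mcV}$ of its tangent lift, and then to exploit the fiberwise linearity of the tangent map $D\bh$ together with the splitting $\mathrm{id}_{TTQ} = \mcQ^{\mcH}_{\Gamma} + \mcQ^{\mcV}_{\Gamma}$. Since $\bh$ is a bundle map over the identity, the composite $\bh\circ\bw : Q \to TQ$ is again a section, so Proposition~\ref{prop_projection} gives $\Conn{}{\bv}{(\bh\circ\bw)} = \mcP_{\mcV}\big((\bh\circ\bw)_*\bv\big)$. Applying the chain rule to the tangent map, $(\bh\circ\bw)_* = D\bh\circ\bw_*$, this becomes $\Conn{}{\bv}{(\bh\circ\bw)} = \mcP_{\mcV}\big(D\bh(\bw_*\bv)\big)$, which is the central identity to unpack.

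Next I would split the single vector $\bw_*\bv \in T_{\bw(\bq)}TQ$ as $\bw_*\bv = \mcQ^{\mcH}_{\Gamma}(\bw_*\bv) + \mcQ^{\mcV}_{\Gamma}(\bw_*\bv)$. Because $D\bh$ is linear on the fiber $T_{\bw(\bq)}TQ$ and $\mcP_{\mcV}$ is linear, the composite $\mcP_{\mcV}\circ D\bh$ distributes over this sum, yielding $\mcP_{\mcV}D\bh\big|_{\mcH TQ}\big(\mcQ^{\mcH}_{\Gamma}\bw_*\bv\big) + \mcP_{\mcV}D\bh\big|_{\mcV TQ}\big(\mcQ^{\mcV}_{\Gamma}\bw_*\bv\big)$. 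By the defining formulas for $\nabla^{\mcH}$ and $D^{\mcV}$, these are precisely the horizontal-connection and vertical-Jacobian terms, once I identify the vertical part $\mcQ^{\mcV}_{\Gamma}\bw_*\bv$ with $\Conn{}{\bv}{\bw}$ (again Proposition~\ref{prop_projection}) and the horizontal part $\mcQ^{\mcH}_{\Gamma}\bw_*\bv$ with the horizontal lift \eqref{hor_lift} of $\bv$ at $\bw(\bq)$. The latter follows because $\pi_{TQ}\circ\bw = \mathrm{id}_Q$ forces $\mcP_{\mcH}\bw_*\bv = \bv$, so the horizontal part carries no information beyond $\bv$ and the base point. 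This establishes the coordinate-free identity.

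It then remains to verify the local expressions \eqref{vertical_deriv} and \eqref{horizontal_deriv}. Writing $\bw:\bq\mapsto(\bq,w(\bq))$ and $\bv = v^i\,\bpp{}{q^i}$, the lift is $\bw_*\bv = v^i\,\bpp{}{\tl q^i} + \pp{w^j}{q^i}v^i\,\bpp{}{v^j}$. The vertical projection \eqref{vert_proj} at the source point $(\bq,w)$ reproduces $(\Conn{}{\bv}{\bw})^l$, and since $D\bh$ sends this vertical vector to a vertical vector (its base direction vanishes), feeding it through $\mcP_{\mcV}$ gives \eqref{vertical_deriv} immediately. For the horizontal term I would evaluate $D\bh$ on the horizontal lift $(\bq,w,v^i,-\Gamma^l_{ki}w^k v^i)$ and re-project with $\mcP_{\mcV}$.

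The one place that needs care — and the main obstacle — is this horizontal term: because $\bh$ is nonlinear, $D\bh$ does not preserve horizontality, so after applying $D\bh$ I must re-apply $\mcQ^{\mcV}_{\Gamma}$ at the \emph{image} point $(\bq,\bh(\bq,w))$, not the source. This second projection is exactly what generates the connection contribution in \eqref{horizontal_deriv}; it emerges from \eqref{vert_proj} as $\Gamma^m_{ki}\bh^k v^i$, whereas the stated formula has $\Gamma^m_{ki}v^k\bh^i$, so reconciling the two requires the torsion-free symmetry $\Gamma^m_{ki}=\Gamma^m_{ik}$ of the Levi-Civita connection. Keeping straight which base point each projection is taken at — source $(\bq,w)$ for the split, target $(\bq,\bh(\bq,w))$ for the final projection — is the sole bookkeeping subtlety, after which the three terms of \eqref{horizontal_deriv} drop out.
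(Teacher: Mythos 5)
Your proof is correct and follows essentially the same route as the paper's: both reduce to Proposition \ref{prop_projection} via the chain rule $\Conn{}{\bv}{(\bh\circ\bw)} = \mcP_{\mcV}\left(D\bh\cdot \bw_{*}\bv\right)$ and then split $\bw_{*}\bv$ into horizontal and vertical parts under $TTQ \cong \mcH TQ \oplus \mcV TQ$, identifying the two summands with $\Conn{\mcH}{\bv}{\bh(\bw)}$ and $D^{\mcV}\bh\left(\Conn{}{\bv}{\bw}\right)$. The only difference is cosmetic, in how the local formulas are verified: you push the horizontal lift through $D\bh$ and re-project with $\mcQ^{\mcV}_{\Gamma}$ at the image point $(\bq,\bh(\bq,\bw))$, invoking the torsion-free symmetry $\Gamma^{m}_{ki}=\Gamma^{m}_{ik}$ to match the stated index order, whereas the paper expands $\Conn{}{\bv}{(\bh\circ\bw)}$ by the Leibniz rule directly in coordinates and regroups terms using the expression for $\Conn{}{\bv}{\bw}$ --- both computations yield \eqref{vertical_deriv} and \eqref{horizontal_deriv}.
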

\begin{proof}
    See appendix \ref{proof_prop_bundle_map}
\end{proof}

\begin{remark}
    Note that when the bundle map $\bh:TQ \rightarrow TQ$ linear in velocities, the vertical derivative $D^{\mcV}\bh = \bh$ is linear in its argument, and the horizontal covariant derivative $\Conn{\mcH}{\bX}{\bh}$ reduces to the usual covariant derivative $\Conn{}{\bX}{\bh}$ e.g., see equation \eqref{covariant_d_tensor}. Therefore, this "chain rule" is a generalization of the traditional chain rule for the covariant derivative of a linear vector bundle map. 
\end{remark}

\begin{remark}\label{covariant_remark}
    For practical calculations, in the coordinate induced basis $\{\bpp{}{q^{1}},\dots,\bpp{}{q^{n}}\}$, the covariant derivative $\Conn{}{\bX}{\bY}$ given in \eqref{covariant_d_vector} may be calculated in matrix form as
    \begin{align}\label{conn_vec_matrix}
        \Conn{}{\bX}{\bY} = \left[\pp{\bY}{\bq}\right]\bX + \left[\bGamma(\bq,\bX)\right]\bY, 
    \end{align} where $[\pp{\bY}{\bq}] = [\pp{Y^i}{q^k}]$ is the Jacobian of $\bX$ and the Christoffel symbols are organized in a matrix $[\bGamma(\bq,\bX)] = [\Gamma^{i}_{kj}X^{j}]$. Similarly, the covariant derivative of a (1,1)-tensor $\mcA$ given in \eqref{covariant_d_tensor} may be calculated in matrix form as
    \begin{align}\label{conn_tensor_matrix}
        [\Conn{}{\bX}{\mcA}] = \left[\sum^{n}_{k=1}\ \pp{\mcA}{q^{k}}X^{k}\right] + [\bGamma(\bq,\bX)][\mcA] - [\mcA][\bGamma(\bq,\bX)].
    \end{align} Lastly, the covariant derivative $\Conn{\mcH}{\bX}{\bh}$ of a tangent bundle map $\bh(\bq,\bw)$ as in \eqref{horizontal_deriv} may be calculated in matrix form by 
    \begin{align}\label{conn_bundle_matrix}
        \Conn{\mcH}{\bX}{\bh} = \left[\pp{\bh}{\bq}(\bq,\bw)\right]\bX + [\bGamma(\bq,\bX)]\bh - \left[D^{\mcV}\bh\right][\bGamma(\bq,\bX)]\bw.
    \end{align}
\end{remark}

\subsubsection{Dynamics of Mechanical Systems on $TTQ$}
A vector field $\bX:TQ \rightarrow TTQ$ is a \emph{second order system} on $Q$ if and only if for each integral curve $\bq:[a,b]\rightarrow TQ$ of $\bX$ we have $\dd{}{t}\left(\pi_{TQ}\circ \bq \right)(t) =  \left(\pi_{TQ} \right)_{*}\dot{\bq} = \bq(t)$, summarized in the following diagram \cite{abraham2008foundations}
\begin{center}
% https://q.uiver.app/?q=WzAsNSxbMCwxLCJUUSJdLFs0LDFdLFsxLDAsIlRUUSJdLFsxLDIsIlEiXSxbMiwxLCJUUSJdLFsyLDAsIlxcbGVmdChcXHBpX3tUUX1cXHJpZ2h0KV97Kn0iLDIseyJsYWJlbF9wb3NpdGlvbiI6MjB9XSxbMiw0LCJcXHBpX3tUVFF9IiwwLHsibGFiZWxfcG9zaXRpb24iOjMwfV0sWzQsMywiXFxwaV97VFF9Il0sWzAsMywiXFxwaV97VFF9IiwyXSxbNCwyLCJcXGJvbGRzeW1ib2x7WH0iLDIseyJjdXJ2ZSI6NX1dLFswLDIsIlxcYm9sZHN5bWJvbHtYfSIsMCx7ImN1cnZlIjotNX1dLFswLDQsImlkX3tUUX0iLDJdXQ==
\[\begin{tikzcd}
	& TTQ \\
	TQ && TQ && {} \\
	& Q
	\arrow["{\left(\pi_{TQ}\right)_{*}}"'{pos=0.2}, from=1-2, to=2-1]
	\arrow["{\pi_{TTQ}}"{pos=0.3}, from=1-2, to=2-3]
	\arrow["{\pi_{TQ}}", from=2-3, to=3-2]
	\arrow["{\pi_{TQ}}"', from=2-1, to=3-2]
	\arrow["{\boldsymbol{X}}"', curve={height=30pt}, from=2-3, to=1-2]
	\arrow["{\boldsymbol{X}}", curve={height=-30pt}, from=2-1, to=1-2]
	\arrow["{id_{TQ}}"', from=2-1, to=2-3]
\end{tikzcd}\]
\end{center} 
If $\bX = \left(\bq,\bv,\bV,\bW \right)$ is a second-order vector field, then it follows that the velocity components satisfy $v^{i} = V^{i}$. 

Given, the decomposition $TTQ \cong \mcH TQ \oplus \mcV TQ$, we construct the vector field associated with the affine connection system \eqref{EL_affine_dynamics}. Let $\left(\bq,\bv\right):[a,b] \rightarrow TQ$ be a curve of satisfying \eqref{EL_affine_dynamics}. The forces acting on the system $\left(-\bdd V^{\sharp_{\mbG}} +F^{\sharp_{\mbG}} \right)$ in the right-hand side of \eqref{EL_affine_dynamics} are vertically lifted to $\mcV_{(\bq,\bv)}TQ$ by \eqref{vert_lift}. The vector field associated with the parallel transport equation $\Conn{}{\bv(t)}{\bv(t)} = \bzero $ on the left-hand side of \eqref{EL_affine_dynamics} is given by the horizontal lift \eqref{hor_lift} of the velocity $\bv$ to $\mcH_{(\bq,\bv_{\bq})}TQ$.
Therefore, the time variation of the curve $\dd{}{t}\left(\bq,\bv\right) \in T_{(\bq,\bv)}TQ \cong \mcH_{(\bq,\bv)}TQ\oplus\mcV_{(\bq,\bv)}TQ$ satisfies the differential equations 

\begin{align}
\label{second_order_vf}
    & \dd{\bv}{t} = -\Gamma^{k}_{ij}v^{i}v^{j}\bpp{}{v^{k}} + \left( - \pp{V}{q^{i}} + F_{i} \right)\mbG^{ik} \bpp{}{v^{k}},\\
    &\dd{\bq}{t} = v^{i}\bpp{}{\tl{q}^{i}}.
\end{align} 
\begin{remark}
    When there are no external forces in \eqref{second_order_vf}, the vector field $\bZ:TQ \rightarrow TTQ$
    \begin{align}
        \bZ(\bq,\bv) = v^{i}\bpp{}{\tl{q}^{i}} - \Gamma^{k}_{ij}v^{i}v^{j}\bpp{}{v^{k}},
    \end{align} is called the \emph{geodesic spray} \cite{bullo2019geometric,lewis2000simple}. The integral curves of the geodesic spray $\bZ$ in $TQ$ project to geodesics with respect to $\Conn{}{}{}$ in $Q$.
\end{remark}

%========================================================

%             Affine connection friction 

%========================================================
\section{Affine Connection System as a Slow/Fast System with Strong Friction} \label{sec_affine_connection_systems} 

Let $Q$ be a smooth manifold and let $\mcD\subset TQ$ be a linear subbundle representing a set of $m$ ideal nonholonomic constraints. To realize the nonholonomic constraints by friction forces, consider a Rayleigh dissipation function $\mcR:T_qQ\times T_qQ \rightarrow \mbR$, represented as a symmetric (0,2)-tensor on each tangent space \cite{eldering2016realizing}. This function induces a dissipative friction force given by the (negative) fibre derivative $-\mbF\mcR:TQ \rightarrow T^*Q$ of the dissipation function. Assume that this friction force satisfies $\mcD = \ker\left(\mbF\mcR\right)$ to ensure that the nonholonomic distribution is attractive. 

We construct this dissipation function $\mcR$  as follows. Consider a $\mbR^{m}$-valued one-form $a_{\bq}:T_{\bq}Q \rightarrow \mbR^{m}$ consisting of the $m$ nonholonomic constraint one-forms $a^{l} \in T^*Q$ for $l = 1,\dots,m$. For each configuration $\bq \in Q$, define the friction coefficients as the symmetric linear vector bundle map $\bar{\mu}(\bq): \mbR^{m} \rightarrow \mbR^{m*} \cong \mbR^{m}$ and finally, define the adjoint of the map representing the constraint as $a^{*}_{\bq}:\mbR^{m*}\rightarrow T^{*}_{\bq}Q$. Define the Rayleigh dissipation function $\mcR = \frac{1}{2}a^{*}_{\bq}\circ \mu(q) \circ a_{\bq}$ and in local coordinates is given by 
\begin{align} \label{Rayleigh_disp}
    \mcR = \frac{1}{2}\underbrace{\mbA_{i}^{l}(\bq)\bar{\mu}_{lk}(\bq)\mbA^{k}_{j}(\bq)}_{\mcR_{ij}(\bq)}\bdd q^{j}\otimes\bdd q^{i}.
\end{align}
The dissipative Rayleigh friction force with large coefficients is defined as follows. Fix a small positive real number $\eps >0$ (which is thought to be the time scale on which this force acts) and rescale the friction coefficients $\bar{\mu} = 1/\eps\mu$ to find the friction force given by 
\begin{align}  \label{Rayleigh_force}
    % F_\eps = -\frac{1}{\eps}\mbF\mcR(\bv) \in T^*Q.
    F_\eps = -\frac{1}{\eps}\mbF\mcR(\bv) = - \frac{1}{\eps}\mcR_{ij}(\bq)v^{i}\bdd q^{j} \in T^*Q.
\end{align} Let $\bq:[a,b] \rightarrow Q$ be a smooth curve that satisfies the equations of motion formulated by the forced affine connection system 
\begin{align}
\label{fric_con_sys:eq1}
&\Conn{}{\bv}{\bv} = -\left(\bdd V \right)^{{\sharp_{\mbG}}}  - \frac{1}{\eps}\mbF\mcR(\bv)^{\sharp_{\mbG}},\\
& \dd{}{t}\bq = \bv.\label{fric_con_sys:eq2}
\end{align}

 As this is a singularly perturbed system, it follows that (\ref{fric_con_sys:eq1}-\ref{fric_con_sys:eq2} emits a \emph{normally hyperbolic invariant manifold} (NHIM) for $\eps >0$ \cite{eldering2016realizing}. That is, for small values of $\eps$, there exists a perturbed manifold $\mcD_\eps$ that is $C^1$-close and diffeomorphic to the nonholonomic constraint distribution $\mcD$. The existence of this submanifold has been studied by Eldering and summarize the result \cite{eldering2016realizing} as follows. Write the system as a vector field \eqref{second_order_vf} on $TQ$ according to the decomposition $TTQ \cong \mcH TQ \oplus \mcV TQ$: 
 \begin{align} \label{geodesic_spray}
     \small\dd{}{t}\left(\bq,\bv\right) = \bZ(\bq,\bv) + \left(-\bdd V^{{\sharp_{\mbG}}}\right)^{\lift{\ver}} - \left(\frac{1}{\eps}\mbF\mcR(\bv)^{\sharp_{\mbG}}\right)^{\lift{\ver}}
 \end{align} where we have vertically lifted the forces to $TTQ$ using \eqref{vert_lift}. Make the reparameterization of the trajectory to fast-time $\tau = t/\eps$ to find  
 \begin{align}
     \small\dd{}{\tau}\left(\bq,\bv\right) = \eps \bZ(\bq,\bv) +\eps \left(-\bdd V^{{\sharp_{\mbG}}}+ U^{\sharp_{\mbG}}\right)^{\lift{\ver}} -  \left(\mbF\mcR(\bv)^{\sharp_\mbG}\right)^{\lift{\ver}}.
 \end{align} When $\eps = 0$ then the fast time system is of the form 
 \begin{align}\label{fast_time_dynamics}
     \dd{}{\tau}\left(\bq,\bv\right) = - \left(\mbF\mcR(\bv)^{\sharp_\mbG}\right)^{\lift{\ver}}.
 \end{align} When this friction force acts on an instantaneous time-scale ($\eps = 0$) it is the dominant force acting on the system. By construction, when we restrict the dynamics to the distribution $\mcD$, the friction force vanishes; thus $\mcD$ is an invariant manifold for the fast-time dynamics (\ref{fast_time_dynamics}). When the dynamics are restricted to the $\mbG$-orthogonal complement $\mcD^{\bot}$, we find that the dynamics are attractive. Therefore, by the theory of \emph{normally hyperbolic invariant manifolds} \cite{eldering2013normally} there exists a perturbed manifold $\mcD_\eps$ that is $C^1$-close and diffeomorphic to $\mcD$.

 \subsection{Representation of the Slow Manifold}
 We represent the perturbed manifold $\mcD_{\eps}$ as the image of a section $\tl{\bH}_{\eps}:\mcD \rightarrow TQ$ over the bundle $\mcP:TQ \rightarrow \mcD$. By definition, the section $\tl{\bH}_{\eps}$ satisfies the following commutative diagram
\begin{center}
% % https://q.uiver.app/?q=WzAsNCxbMCwwLCJcXG1hdGhjYWx7RH1cXHRpbWVzXFxtYXRoY2Fse1V9Il0sWzIsMCwiVFEiXSxbMCwyLCJcXG1hdGhjYWx7RH0iXSxbMiwyLCJcXG1hdGhjYWx7RH0iXSxbMCwxLCJcXHRpbGRle1xcYm9sZHN5bWJvbHtIfX1fe1xcdmFyZXBzaWxvbn0iXSxbMiwzLCJcXHRleHR7aWR9X3tcXG1hdGhjYWx7RH19Il0sWzAsMiwiXFx0ZXh0e3Byb2p9X3sxfSIsMl0sWzEsMywiXFxtYXRoY2Fse0R9Il1d
% \[\begin{tikzcd}
% 	{\mathcal{D}} && TQ \\
% 	\\
% 	{\mathcal{D}} && {\mathcal{D}}
% 	\arrow["{\tilde{\boldsymbol{H}}_{\varepsilon}}", from=1-1, to=1-3]
% 	\arrow["{\text{id}_{\mathcal{D}}}", from=3-1, to=3-3]
% 	\arrow["{\text{id}_{\mcD}}"', from=1-1, to=3-1]
% 	\arrow["{\mathcal{P}}", from=1-3, to=3-3]
% \end{tikzcd}\]
% https://q.uiver.app/#q=WzAsMyxbMCwwLCJcXG1hdGhjYWx7RH0iXSxbMiwwLCJUUSJdLFsxLDEsIlxcbWF0aGNhbHtEfSJdLFswLDEsIlxcdGlsZGV7XFxib2xkc3ltYm9se0h9fV97XFx2YXJlcHNpbG9ufSJdLFsxLDIsIlxcbWF0aGNhbHtQfSJdLFswLDIsIlxcdGV4dHtpZH1fe1xcbWF0aGNhbHtEfX0iLDJdXQ==
\[\begin{tikzcd}
	{\mathcal{D}} && TQ \\
	& {\mathcal{D}}
	\arrow["{\tilde{\boldsymbol{H}}_{\varepsilon}}", from=1-1, to=1-3]
	\arrow["{\mathcal{P}}", from=1-3, to=2-2]
	\arrow["{\text{id}_{\mathcal{D}}}"', from=1-1, to=2-2]
\end{tikzcd}\]
\end{center} That is, $\mcP\circ\tl{\bH}_{\eps} = \text{id}_{\mcD}$. By the projection property $\text{id}_{TQ} = \mcP\oplus \mcP^{\bot}$, we find that the section $\tl{\bH}_{\eps}$ is given by 
 \begin{align}
     \tl{\bH}_{\eps} = \text{id}_{\mcD} + \underbrace{\mcP^{\bot}\circ\tl{\bH}_{\eps}}_{=:\bh_{\eps}}
 \end{align} where $\bh_{\eps}:\mcD \rightarrow\mcD^{\bot}$ is a section over $\pi_{TQ}:TQ \rightarrow Q$. Therefore, the slow manifold $\mcD_{\eps}$ is defined by the set 
 \begin{align}
     \mcD_{\eps} = \left\{\bv \in TQ \, | \, \bv = \bv^{\mcD} + \bh_{\eps}(\bv^{\mcD}), \, \bv^{\mcD} \in \mcD \right\}.
 \end{align}

 Using the projection maps, we then define the sections $\bH_{\eps} = \tl{\bH}_{\eps}\circ\mcP:TQ\rightarrow TQ$ as 
 \begin{align}
     \bH_{\eps}(\bv) = \mcP(\bv) + \bh_{\eps}\circ\mcP(\bv),
 \end{align} where $\bh_{\eps}\circ\mcP = \mcP^{\bot}\circ\tl{\bH}_{\eps}\circ\mcP:TQ \rightarrow \mcD^{\bot}$. Therefore, the perturbed manifold $\mcD_{\eps} = \mbox{im}(\bH_{\eps})$ is given by 
 \begin{align}
\mcD_{\eps} = \left\{\bw \in TQ\, | \, \bw = \bH_{\eps}(\bv) = \mcP(\bv) + \bh_{\eps}\circ\mcP(\bv), \, \bv \in TQ \right\}.
 \end{align} Note that the form of this function $\bh_{\eps}$ depends on the choice of projection map $\mcP$. 
 \begin{remark}
     Physically speaking, the slow manifold represents the slip velocities in the orthogonal complement $\mcD^{\bot}$ and are a function of the variables in the nonholonomic distribution $\mcD$. 
 \end{remark}

\begin{lemma}\label{prop_nonlin_proj}
    The map $\bH_{\eps}:TQ \rightarrow TQ$ for the system \eqref{fric_con_sys:eq1}-\eqref{fric_con_sys:eq2} is a \emph{nonlinear projection} onto $\mcD_{\eps}$. 
\end{lemma}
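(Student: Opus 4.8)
The plan is to read ``nonlinear projection onto $\mcD_{\eps}$'' in the usual sense of an idempotent self-map: a map $\bP:TQ \to TQ$ is a (nonlinear) projection onto its image provided $\bP\circ\bP = \bP$, from which it follows automatically that $\bP$ restricts to the identity on $\mbox{im}(\bP)$ (indeed, if $\bw = \bP(\bv)$ then $\bP(\bw) = \bP(\bP(\bv)) = \bP(\bv) = \bw$). Accordingly, it suffices to verify two things for $\bH_{\eps} = \tl{\bH}_{\eps}\circ\mcP$: that $\mbox{im}(\bH_{\eps}) = \mcD_{\eps}$, and that $\bH_{\eps}\circ\bH_{\eps} = \bH_{\eps}$. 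The first holds by construction, since $\mcD_{\eps}$ was \emph{defined} as $\mbox{im}(\bH_{\eps})$, so the entire content of the lemma lies in establishing idempotency.

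For idempotency I would argue directly from the defining commutative diagram of $\tl{\bH}_{\eps}$, namely the section property $\mcP\circ\tl{\bH}_{\eps} = \text{id}_{\mcD}$. Writing out the composition and inserting this identity in the middle yields
\[
\bH_{\eps}\circ\bH_{\eps} = \tl{\bH}_{\eps}\circ\mcP\circ\tl{\bH}_{\eps}\circ\mcP = \tl{\bH}_{\eps}\circ\left(\mcP\circ\tl{\bH}_{\eps}\right)\circ\mcP = \tl{\bH}_{\eps}\circ\text{id}_{\mcD}\circ\mcP = \tl{\bH}_{\eps}\circ\mcP = \bH_{\eps}.
\]
This is the heart of the proof, and it shows that idempotency is a purely formal consequence of $\tl{\bH}_{\eps}$ being a section over $\mcP$, independent of both $\eps$ and the specific form of $\bh_{\eps}$.

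As a cross-check I would also exhibit the same conclusion from the explicit form $\bH_{\eps}(\bv) = \mcP(\bv) + \bh_{\eps}(\mcP(\bv))$. Since $\mcP$ is a \emph{linear} idempotent onto $\mcD$ and $\bh_{\eps}$ takes values in $\mcD^{\bot} = \ker\mcP$, applying $\mcP$ to $\bH_{\eps}(\bv)$ gives $\mcP(\bH_{\eps}(\bv)) = \mcP(\mcP(\bv)) + \mcP(\bh_{\eps}(\mcP(\bv))) = \mcP(\bv)$; substituting this into $\bH_{\eps}(\bH_{\eps}(\bv)) = \mcP(\bH_{\eps}(\bv)) + \bh_{\eps}(\mcP(\bH_{\eps}(\bv)))$ recovers $\bH_{\eps}(\bv)$. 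Combining idempotency with $\mbox{im}(\bH_{\eps}) = \mcD_{\eps}$ then gives $\bH_{\eps}|_{\mcD_{\eps}} = \text{id}_{\mcD_{\eps}}$, completing the claim.

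I do not expect a genuine analytic obstacle here: unlike the invariance condition and the recursion developed later, this statement is entirely structural and holds for \emph{any} section over the projection $\mcP$. The only point requiring care is notational precision, namely keeping the linear map $\mcP$ distinct from the nonlinear map $\bH_{\eps}$ (whose nonlinearity enters solely through $\bh_{\eps}$), and recognizing that the relevant input is exactly the commutative-diagram/section property rather than any property specific to the perturbed dynamics.
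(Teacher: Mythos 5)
Your proof is correct, but your primary argument routes through a different identity than the paper's. The paper proves idempotency by expanding the explicit form: it writes $\bH_{\eps}\circ\bH_{\eps}(\bv) = \tl{\bH}_{\eps}\left(\mcP\circ\mcP(\bv) + \mcP\circ\mcP^{\bot}\circ\tl{\bH}_{\eps}\circ\mcP(\bv)\right)$, then uses linearity of $\mcP$, the idempotency $\mcP\circ\mcP = \mcP$, and the complementarity $\mcP\circ\mcP^{\bot} = \bzero$ to collapse this to $\tl{\bH}_{\eps}\circ\mcP(\bv)$ --- essentially your ``cross-check,'' promoted to the whole proof. Your main argument instead invokes only the section property $\mcP\circ\tl{\bH}_{\eps} = \text{id}_{\mcD}$ and cancels the middle composition in $\tl{\bH}_{\eps}\circ\mcP\circ\tl{\bH}_{\eps}\circ\mcP$, which is both shorter and strictly more general: it shows idempotency holds for \emph{any} section $\tl{\bH}_{\eps}$ over \emph{any} surjection $\mcP:TQ\rightarrow\mcD$, without using linearity of $\mcP$ or the orthogonal splitting $\text{id}_{TQ} = \mcP\oplus\mcP^{\bot}$ at all. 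What the paper's route buys in exchange is that the manipulations ($\mcP$ linear, $\bh_{\eps} = \mcP^{\bot}\circ\tl{\bH}_{\eps}$ valued in $\ker\mcP = \mcD^{\bot}$) are exactly the ones reused in Lemmas \ref{prop_nonlin_proj_comp} and \ref{prop:perturbed_dist}, so the explicit computation sets up the later arguments. Your observations that $\text{im}(\bH_{\eps}) = \mcD_{\eps}$ holds by definition and that $\bH_{\eps}$ restricts to the identity on its image are correct and are left implicit in the paper; no gap either way.
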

\begin{proof}
    Given a section $\bv:Q \rightarrow TQ$ we find
     \begin{align} \notag
         \bH_{\eps}\circ\bH_{\eps}(\bv) &= \bH_{\eps}\left(\mcP(\bv) + \bh_{\eps}\circ\mcP(\bv) \right)\\  \notag
         &= (\tl{\bH}_{\eps}\circ\mcP)\left(\mcP(\bv) + \bh_{\eps}\circ\mcP(\bv) \right)\\  \notag
         &= \tl{\bH}_{\eps}\left(\mcP\circ\mcP(\bv) + \mcP\circ\bh_{\eps}\circ\mcP(\bv) \right)\\ \notag
         &= \tl{\bH}_{\eps}\left(\mcP\circ\mcP(\bv) + \mcP\circ \mcP^{\bot}\circ\tilde{\bH}_{\eps}\circ \mcP(\bv) \right)\\ \notag
         & = \tl{\bH}_{\eps}\circ\mcP(\bv) = \bH_{\eps}(\bv). \notag
     \end{align}
\end{proof}  
\begin{remark}
    Since $\bH_{\eps}$ defines a nonlinear projection map, we may also define its complement $\bH^{\bot}_{\eps} := id_{Q} - \bH_{\eps}$. This defines a projection map onto the complement of the slow manifold abbreviated by $TQ/\mcD_{\eps}$. 
\end{remark}
\begin{lemma}\label{prop_nonlin_proj_comp} The map $\bH^{\bot}_{\eps}:TQ\rightarrow TQ/\mcD_{\eps}$ is a nonlinear projection map that is complementary to $\bH_{\eps}$.
\end{lemma}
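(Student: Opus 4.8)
The plan is to establish the two defining features of a complementary nonlinear projection: that $\bH^\bot_\eps$ together with $\bH_\eps$ resolves the identity, and that $\bH^\bot_\eps$ is idempotent with image transverse to $\mcD_\eps = \mbox{im}(\bH_\eps)$. The decomposition of the identity is immediate, since by definition $\bH^\bot_\eps := \mathrm{id}_{TQ} - \bH_\eps$ gives $\bH_\eps + \bH^\bot_\eps = \mathrm{id}_{TQ}$. The substance of the proof is therefore the idempotency $\bH^\bot_\eps \circ \bH^\bot_\eps = \bH^\bot_\eps$ together with the mutual annihilation relations $\bH^\bot_\eps \circ \bH_\eps = \bzero$ and $\bH_\eps\circ\bH^\bot_\eps = \bzero$, which encode that each projection kills the image of the other.

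First I would record the relation that drove Lemma \ref{prop_nonlin_proj}, namely $\mcP\circ\bH_\eps = \mcP$: applying $\mcP$ to $\bH_\eps(\bv) = \mcP(\bv) + \bh_\eps(\mcP(\bv))$ and using $\mcP\circ\mcP = \mcP$ together with $\bh_\eps = \mcP^\bot\circ\tl{\bH}_\eps$ and $\mcP\circ\mcP^\bot = \bzero$ returns $\mcP(\bv)$. Because $\mcP$ is linear, this yields $\mcP\circ\bH^\bot_\eps = \mcP - \mcP\circ\bH_\eps = \bzero$, so $\mbox{im}(\bH^\bot_\eps)\subseteq\ker\mcP = \mcD^\bot$. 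This is precisely the statement that the codomain $TQ/\mcD_\eps$ is realized as a complement of $\mcD_\eps$ inside $TQ$, and it places $\bH^\bot_\eps$ in the correct bundle.

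Next I would verify the composition relations pointwise on a section $\bv:Q\to TQ$, which is forced by nonlinearity: one cannot expand $(\mathrm{id}_{TQ}-\bH_\eps)\circ(\mathrm{id}_{TQ}-\bH_\eps)$ formally as in the linear case. The relation $\bH^\bot_\eps\circ\bH_\eps = \bzero$ is immediate from Lemma \ref{prop_nonlin_proj}, since $\bH^\bot_\eps(\bH_\eps(\bv)) = \bH_\eps(\bv) - \bH_\eps(\bH_\eps(\bv)) = \bzero$. For the opposite order I would use $\mcP\circ\bH^\bot_\eps=\bzero$ to compute $\bH_\eps(\bH^\bot_\eps(\bv)) = \tl{\bH}_\eps(\mcP(\bH^\bot_\eps(\bv))) = \tl{\bH}_\eps(\bzero)$; since $\tl{\bH}_\eps$ sends the zero section of $\mcD$ to the zero section of $TQ$ (equivalently $\bh_\eps$ vanishes on the zero section, as $\mcD_\eps$ contains it), this equals $\bzero$, giving $\bH_\eps\circ\bH^\bot_\eps=\bzero$. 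Idempotency then drops out pointwise: $\bH^\bot_\eps(\bH^\bot_\eps(\bv)) = \bH^\bot_\eps(\bv) - \bH_\eps(\bH^\bot_\eps(\bv)) = \bH^\bot_\eps(\bv)$.

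The main obstacle is exactly the nonlinearity of the maps: the comfortable algebra of linear idempotents is unavailable, so every identity must be traced through composition on actual elements, and the one delicate input is the base-point evaluation $\tl{\bH}_\eps(\bzero) = \bh_\eps(\bzero)$ on each fibre. I expect this value to be the crux, since it is what separates the unconditional statement $\bH^\bot_\eps\circ\bH_\eps = \bzero$ from the converse $\bH_\eps\circ\bH^\bot_\eps = \bzero$: it vanishes precisely when the zero section lies on the slow manifold, which holds for the systems treated here (in particular when the potential is constant, as in the flat-ground example of Section \ref{sec_case_study}). Thus a single geometric fact, rather than any new computation, upgrades the assertion that $\bH^\bot_\eps$ annihilates $\mcD_\eps$ to full idempotency.
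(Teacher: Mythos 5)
Your proof is correct and, in fact, more careful than the paper's own. The paper verifies idempotency and the annihilation relations by expanding compositions such as $\left(\mbid_{TQ}-\bH_{\eps}\right)\circ\left(\bv-\bH_{\eps}(\bv)\right)$ term by term, i.e.\ writing $\bH_{\eps}\left(\bv-\bH_{\eps}(\bv)\right)=\bH_{\eps}(\bv)-\bH_{\eps}\circ\bH_{\eps}(\bv)$ --- a step that presupposes linearity of $\bH_{\eps}$, which the lemma itself denies. Your pointwise route avoids this: you first extract $\mcP\circ\bH_{\eps}=\mcP$ (hence $\mcP\circ\bH^{\bot}_{\eps}=\bzero$, placing $\mbox{im}(\bH^{\bot}_{\eps})$ inside $\ker\mcP=\mcD^{\bot}$ and justifying the codomain $TQ/\mcD_{\eps}$), then get $\bH^{\bot}_{\eps}\circ\bH_{\eps}=\bzero$ directly from Lemma \ref{prop_nonlin_proj}, and reduce both the opposite annihilation and idempotency to the single fibre-wise evaluation $\bH_{\eps}\left(\bH^{\bot}_{\eps}(\bv)\right)=\tl{\bH}_{\eps}(\bzero)=\bh_{\eps}(\bzero)$. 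That reduction is exactly where the content sits, and the paper's formal algebra skips over it.

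The one point to flag is your assertion that the zero section lies on $\mcD_{\eps}$ ``for the systems treated here'': this is not automatic. By \eqref{first_order_slip}, $\bh^{(1)}(\bzero)=-\mbQ_{\bq}\circ\mcP^{\bot}\left(\bdd V^{\sharp_{\mbG}}\right)$, so $\bh_{\eps}(\bzero)\neq\bzero$ whenever the potential force has a component transverse to $\mcD$; in that case both $\bH_{\eps}\circ\bH^{\bot}_{\eps}=\bzero$ and the idempotency of $\bH^{\bot}_{\eps}$ fail by exactly the offset $\bh_{\eps}(\bzero)$. Your hedge --- that the potential force be absent or $\mcD$-tangent, as in the vertical rolling disk of Section \ref{sec_case_study} where $V=0$ --- is precisely the hypothesis under which the lemma as stated holds, so you have not merely reproved the result but exposed a condition that the paper's proof silently (and invalidly) elides by linearizing a nonlinear map.
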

\begin{proof}
    First, we show that $\bH_{\eps}$ is a projection map. Let $\bv:Q\rightarrow TQ$ be a section. Then, 
    \begin{align} \notag
        \bH_{\eps}^{\bot}\circ\bH_{\eps}^{\bot}(\bv)&=\left(id_{TQ} - \bH_{\eps} \right)\circ(\bv - \bH_{\eps}(\bv))\\ \notag
        &=\bv - \bH_{\eps}(\bv) - \bH_{\eps}(\bv) + \bH_{\eps}\circ \bH_{\eps}(\bv)\\ \notag
        &= \bv - \bH_{\eps}(\bv) = \bH_{\eps}^{\bot}(\bv). \notag
    \end{align} Lastly, we show that this map is complement to $\bH_{\eps}$.
    \begin{align}\notag
        &\bH_{\eps}\circ\bH_{\eps}^{\bot}(\bv) = \bH_{\eps}(\bv) - \bH_{\eps}\circ\bH_{\eps}(\bv) = 0,\\ \notag
        &\bH_{\eps}^{\bot}\circ\bH_{\eps}(\bv) = \bH_{\eps}\circ\bH_{\eps}(\bv) - \bH_{\eps}(\bv)   = 0.\notag
    \end{align}
\end{proof}
\begin{remark}
    As a consequence of the complementary projection map in Proposition \ref{prop_nonlin_proj_comp}, we find that $\bv \in \mcD_{\eps}$ if and only if 
    \begin{align} \notag
        \bH_{\eps}^{\bot}(\bv) &= \bv -\bH_{\eps}(\bv) = \bv - \mcP(\bv) - \bh_{\eps}\circ\mcP(\bv)\\ \notag
        &= \mcP^{\bot}(\bv) - \bh_{\eps}\circ\mcP(\bv) = 0.
    \end{align} This is summarized in the following proposition.  
\end{remark}

 \begin{lemma} \label{prop:perturbed_dist}
     $\bv \in \mcD_{\eps}$ if and only if $\mcP^{\bot}(\bv) = \bh_{\eps}\circ\mcP(\bv)$.
 \end{lemma}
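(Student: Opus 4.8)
The plan is to exploit the nonlinear projection structure already established in Lemmas~\ref{prop_nonlin_proj} and~\ref{prop_nonlin_proj_comp}, which reduces the membership question to a fixed-point characterization. Since $\mcD_{\eps} = \mathrm{im}(\bH_{\eps})$ by construction and $\bH_{\eps}$ is idempotent by Lemma~\ref{prop_nonlin_proj}, I claim that $\bv \in \mcD_{\eps}$ holds exactly when $\bv$ is fixed by $\bH_{\eps}$. To see this, if $\bv \in \mcD_{\eps}$ then $\bv = \bH_{\eps}(\bw)$ for some section $\bw$, so that $\bH_{\eps}(\bv) = \bH_{\eps}\circ\bH_{\eps}(\bw) = \bH_{\eps}(\bw) = \bv$; conversely any fixed point of $\bH_{\eps}$ lies in its image, hence in $\mcD_{\eps}$. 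This gives the equivalence $\bv \in \mcD_{\eps} \Longleftrightarrow \bH_{\eps}(\bv) = \bv$.

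Next I would recast the fixed-point condition through the complementary projection $\bH^{\bot}_{\eps} = \mathrm{id}_{TQ} - \bH_{\eps}$ from Lemma~\ref{prop_nonlin_proj_comp}: the identity $\bH_{\eps}(\bv) = \bv$ is equivalent to $\bH^{\bot}_{\eps}(\bv) = \bv - \bH_{\eps}(\bv) = 0$. This is purely a matter of rearranging the definition.

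It then remains to evaluate $\bH^{\bot}_{\eps}(\bv)$ and simplify. Substituting $\bH_{\eps}(\bv) = \mcP(\bv) + \bh_{\eps}\circ\mcP(\bv)$ and using the $\mbG$-orthogonal decomposition $\mathrm{id}_{TQ} = \mcP \oplus \mcP^{\bot}$, so that $\bv - \mcP(\bv) = \mcP^{\bot}(\bv)$, I obtain
\begin{align} \notag
    \bH^{\bot}_{\eps}(\bv) = \bv - \mcP(\bv) - \bh_{\eps}\circ\mcP(\bv) = \mcP^{\bot}(\bv) - \bh_{\eps}\circ\mcP(\bv).
\end{align}
Equating this to zero returns precisely $\mcP^{\bot}(\bv) = \bh_{\eps}\circ\mcP(\bv)$, closing the chain of equivalences.

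The only step needing any care -- rather than a genuine obstacle -- is the first equivalence, namely that membership in the image of a \emph{nonlinear} idempotent map coincides with being a fixed point. As indicated above this follows directly from idempotency and never invokes linearity, so it requires no extra argument; the remaining steps are routine substitutions using the definitions of $\bh_{\eps}$ and $\bH^{\bot}_{\eps}$.
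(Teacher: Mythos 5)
Your proof is correct, but it routes through different machinery than the paper's printed proof. The paper verifies Lemma \ref{prop:perturbed_dist} directly in both directions: for the forward implication it writes $\bw = \mcP(\bv) + \bh_{\eps}\circ\mcP(\bv)$ and applies $\mcP^{\bot}$ and $\mcP$ separately, using $\mcP^{\bot}\circ\mcP = 0$, $\mcP^{\bot}\circ\bh_{\eps} = \bh_{\eps}$ and $\mcP(\bw) = \mcP(\bv)$; for the converse it inserts the hypothesis into the decomposition $\mathrm{id}_{TQ} = \mcP\oplus\mcP^{\bot}$. You instead abstract through the nonlinear projection structure: the image of an idempotent map equals its fixed-point set (a fact you correctly observe needs no linearity, only $\bH_{\eps}\circ\bH_{\eps} = \bH_{\eps}$ from Lemma \ref{prop_nonlin_proj}), and then you pass to the complement $\bH^{\bot}_{\eps}$ of Lemma \ref{prop_nonlin_proj_comp} and evaluate. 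This is in fact precisely the route the paper itself sketches in the remark immediately preceding the lemma, with the one ingredient that remark leaves implicit --- the fixed-point characterization of $\mathrm{im}(\bH_{\eps})$ --- made explicit by you. What your version buys is modularity: the only structural inputs are idempotency and the definition of $\bH^{\bot}_{\eps}$, so the argument survives replacing the $\mbG$-orthogonal splitting by any other complementary decomposition; what the paper's direct computation buys is independence from Lemmas \ref{prop_nonlin_proj} and \ref{prop_nonlin_proj_comp} altogether. One cosmetic point: membership of a single vector $\bv \in T_{\bq}Q$ in $\mcD_{\eps}$ requires only a preimage vector in the same fibre, not a globally defined section $\bw$; since all your manipulations are fibrewise this is harmless (the paper is equally loose on this), but ``for some $\bw \in T_{\bq}Q$'' would be the precise phrasing.
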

 \begin{proof} Suppose that $\bw \in \mcD_{\eps}$. Then, there exists $\bv \in TQ$ such that $\bw = \mcP(\bv) + \bh_{\eps}(\bv)$. We calculate
     \begin{align} \notag
     \small\mcP^{\bot}(\bw) &= \mcP^{\bot}\left(\mcP(\bv) + \mcP^{\bot}\circ\bh_{\eps}(\bv)\right)\\ \notag
     &= \mcP^{\bot}\circ\mcP^{\bot}\circ\bH_{\eps}(\bv) = \bh_{\eps}(\bv) = \mcP^{\bot}\circ \tl{\bH}_{\eps}\circ \mcP(\bv).
 \end{align} Moreover, we find that $\bw$ and $\bv$ project to the same vector
 \begin{align}
     \mcP(\bw) = \mcP\left(\mcP(\bv) + \bh_{\eps}\circ\mcP(\bv) \right)) = \mcP(\bv).
 \end{align} Therefore, $\mcP^{\bot}(\bw) = \mcP^{\bot}\circ \tl{\bH}_{\eps}\circ\mcP(\bv) = \bh_{\eps}\circ\mcP(\bw)$.
 
 Conversely, suppose $\bv \in TQ$ such that $\mcP^{\bot}(\bv) = \bh_{\eps}\circ\mcP(\bv)$. Then, since $\mcP$ is a projection map, we have that $\mbid_{TQ} = \mcP\oplus \mcP^{\bot}$. Therefore, since we assume that $\mcP^{\bot}(\bv) = \bh_{\eps}\circ\mcP(\bv)$, we find
 \begin{align}
     \bv = \mcP(\bv) + \mcP^{\bot}(\bv) = \mcP(\bv) + \bh_{\eps}\circ\mcP(\bv). \notag
 \end{align} Therefore, $\bv \in \mcD_{\eps}$.
 \end{proof} By Proposition \ref{prop:perturbed_dist}, we find that the distribution can be represented as the set
\begin{align}
    \mcD_{\eps} = \left\{\bv \in TQ \, | \, \mcP^{\bot}(\bv) = \bh_{\eps}\circ\mcP(\bv) \right\}.
\end{align}
 
\begin{prop}
    The equations of motion for a curve $\bq:[a,b] \rightarrow Q$ lying on the slow manifold $\mcD_\eps$ are given by
\begin{align}\label{slow_fast_eom:eq1}
    &\Conn{}{\bv(t)}{\bv(t)} = -\bdd V^{{\sharp_{\mbG}}}  - \frac{1}{\eps}\mbF\mcR\left(\bv(t)\right)^{\sharp_\mbG}\\
    & \dd{}{t}\bq(t) = \bv(t),\label{slow_fast_eom:eq2}\\
    &\mcP^\bot(\bv(t)) = \bh_{\eps}\circ\mcP(\bv(t)) \iff \bv(t) \in \mcD_{\eps}.\label{slow_fast_eom:eq3}
\end{align} 
\end{prop}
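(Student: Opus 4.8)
The plan is to assemble the three relations from results already in hand, since this proposition amounts to restating the governing dynamics together with the algebraic characterization of the slow manifold derived above. No genuinely new computation is required; the work is one of identification and consistency.

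First I would note that a curve $\bq(t)$ evolving under the physical system with strong Rayleigh friction is, by definition, a solution of the forced affine connection system \eqref{fric_con_sys:eq1}--\eqref{fric_con_sys:eq2}. These two equations coincide verbatim with \eqref{slow_fast_eom:eq1} and \eqref{slow_fast_eom:eq2}, so the first two components of the claimed system follow immediately from recalling the equations of motion for the friction-forced system. It then remains to justify \eqref{slow_fast_eom:eq3}. The hypothesis that the curve lies on the slow manifold means precisely that $\bv(t) \in \mcD_{\eps}$ for every $t \in [a,b]$. By Lemma \ref{prop:perturbed_dist}, the membership $\bv(t) \in \mcD_{\eps}$ is equivalent to the algebraic relation $\mcP^{\bot}(\bv(t)) = \bh_{\eps}\circ\mcP(\bv(t))$, which is exactly the biconditional asserted in \eqref{slow_fast_eom:eq3}. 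Thus \eqref{slow_fast_eom:eq3} is a direct consequence of the characterization of $\mcD_{\eps}$ already established.

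The one point that warrants care---and which I expect to be the only (mild) obstacle---is the mutual consistency of imposing the algebraic constraint \eqref{slow_fast_eom:eq3} alongside the differential equations \eqref{slow_fast_eom:eq1}--\eqref{slow_fast_eom:eq2}, so that the constraint does not overdetermine the system. Here I would appeal to the invariance of $\mcD_{\eps}$: by the normally hyperbolic invariant manifold theory invoked following \eqref{fast_time_dynamics}, the set $\mcD_{\eps} = \mbox{im}(\bH_{\eps})$ is an invariant manifold of the flow \eqref{geodesic_spray}. Consequently, if the initial velocity satisfies $\bv(a) \in \mcD_{\eps}$, the solution of \eqref{fric_con_sys:eq1}--\eqref{fric_con_sys:eq2} remains in $\mcD_{\eps}$ for all subsequent time, so \eqref{slow_fast_eom:eq3} is propagated by the dynamics rather than imposed as an independent restriction. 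This closes the argument.
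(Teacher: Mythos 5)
Your proposal is correct and follows essentially the same route as the paper: both identify \eqref{slow_fast_eom:eq1}--\eqref{slow_fast_eom:eq2} with the forced affine connection system \eqref{fric_con_sys:eq1}--\eqref{fric_con_sys:eq2}, invoke the normally hyperbolic invariant manifold theory for the existence and invariance of $\mcD_{\eps}$, and appeal to Lemma \ref{prop:perturbed_dist} for the characterization \eqref{slow_fast_eom:eq3}. Your added remark that invariance ensures the constraint is propagated rather than overdetermining the system simply makes explicit what the paper's citation of the singular perturbation theory leaves implicit.
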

\begin{proof}
    The equations of motion \eqref{slow_fast_eom:eq1}-\eqref{slow_fast_eom:eq2} are described by a forced affine connection system. The existence of the slow manifold follows from the geometric singular perturbation theory of normally hyperbolic manifolds \cite{eldering2013normally}. By Proposition \ref{prop:perturbed_dist} the slow manifold is represented by \eqref{slow_fast_eom:eq3}.
\end{proof}

% Therefore, the equations of motion for a curve $\bq:[a,b] \rightarrow Q$ lying on the slow manifold $\mcD_\eps$ are given by
% \begin{align}\label{slow_fast_eom:eq1}
%     &\Conn{}{\bv(t)}{\bv(t)} = -\bdd V^{{\sharp_{\mbG}}}  - \frac{1}{\eps}\mbF\mcR\left(\bv(t)\right)^{\sharp_\mbG}\\
%     & \dd{}{t}\bq(t) = \bv(t),\label{slow_fast_eom:eq2}\\
%     &\mcP^\bot(\bv(t)) = \bh_{\eps}\circ\mcP(\bv(t)) \iff \bv(t) \in \mcD_{\eps}.\label{slow_fast_eom:eq3}
% \end{align} 

%========================================================

%             Approximations of higher order slip velocities 

%========================================================
\section{Approximation of the Slow Subsystem} \label{sec_approximations_slip}

     In this section, we find a generating equation for the invariant manifold $\mcD_\eps$ using the equations of motion \eqref{slow_fast_eom:eq1} - \eqref{slow_fast_eom:eq3}.  The strategy is to consider velocities restricted to the slow manifold $\mcD_\eps$ and expand the section $\bh_\eps$ as a power series and match terms to the desired order in $\epsilon$. 

\subsection{Generating equation for the slow-manifold}
Let us find an expression for the dissipation function $\left(\mbF\mcR\right)^{\sharp_{\mbG}}$ in terms of the unknown function $\bh_\eps$ describing the perturbation of the ideal constraints. Using the projection maps, we project the dynamical equations \eqref{slow_fast_eom:eq1} onto $\mcD\oplus\mcD^{\bot} \cong TQ$
\begin{align}
    &\mcP\left(\Conn{}{\bv}{\bv} \right) = -\mcP\left(\bdd V^{\sharp_{\mbG}} \right)\\
    &\mcP^{\bot}\left(\Conn{}{\bv}{\bv}\right) = -\mcP^{\bot}\left(\bdd V^{\sharp_{\mbG}} \right) - \frac{1}{\eps}\mbF\mcR^{\sharp_{\mbG}}(\bv).
\end{align} 

\begin{lemma}\label{prop_first_isomorphism}
Consider the transformation $\mbF\mcR^{\sharp_{\mbG}}_{\bq}:T_{\bq}Q \rightarrow T_{\bq}Q$, then there exists a transformation $\mbQ_{\bq}:\mcD^{\bot}_{\bq} \rightarrow\mcD^{\bot}_{\bq}$ satisfying $\mbQ_{\bq}\circ \mbF\mcR^{\sharp_{\mbG}}_{\bq} = \mcP^{\bot}_{\bq}$.
\end{lemma}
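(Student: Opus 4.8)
The plan is to work fibrewise and to exploit the symmetry of the $(0,2)$-tensor $\mcR$, which forces the endomorphism $A_\bq := \mbF\mcR^{\sharp_{\mbG}}_\bq$ to be self-adjoint with respect to $\mbG_\bq$; the statement is then essentially the first isomorphism theorem of linear algebra applied to $A_\bq$. First I would note that, being the fibre derivative of a quadratic form, $\mbF\mcR_\bq:T_\bq Q \to T^*_\bq Q$ is linear, so $A_\bq = \sharp_{\mbG}\circ \mbF\mcR_\bq$ is a linear map $T_\bq Q \to T_\bq Q$ with components $\mbG^{ik}\mcR_{kj}$. Self-adjointness follows from a one-line computation: for $\bX_\bq,\bY_\bq \in T_\bq Q$ we have $\mbG_\bq(A_\bq\bX_\bq,\bY_\bq) = \aaa{\mbF\mcR(\bX_\bq)|\bY_\bq} = \mcR_{ij}X^iY^j$, which is symmetric in its two arguments precisely because $\mcR_{ij}=\mcR_{ji}$.

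Next I would identify the kernel and image of $A_\bq$. Since $\sharp_{\mbG}$ is an isomorphism, the realization hypothesis $\mcD = \ker(\mbF\mcR)$ gives $\ker A_\bq = \mcD_\bq$. Self-adjointness then yields the key identity $\mbox{im}(A_\bq) = (\ker A_\bq)^{\bot} = \mcD^{\bot}_\bq$, where the orthogonal complement is taken with respect to $\mbG_\bq$. Because $\mcD_\bq \cap \mcD^{\bot}_\bq = \{\bzero\}$, the restriction $A_\bq|_{\mcD^{\bot}_\bq}:\mcD^{\bot}_\bq \to \mcD^{\bot}_\bq$ is injective, and it is surjective since its image is all of $\mbox{im}(A_\bq) = \mcD^{\bot}_\bq$; hence it is an isomorphism of $\mcD^{\bot}_\bq$.

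I then define $\mbQ_\bq := \left(A_\bq|_{\mcD^{\bot}_\bq}\right)^{-1}:\mcD^{\bot}_\bq \to \mcD^{\bot}_\bq$ and verify the factorization. Given $\bv \in T_\bq Q$, decompose $\bv = \mcP_\bq(\bv) + \mcP^{\bot}_\bq(\bv)$ using the $\mbG$-orthogonal splitting $T_\bq Q \cong \mcD_\bq \oplus \mcD^{\bot}_\bq$. Since $\mcP_\bq(\bv) \in \mcD_\bq = \ker A_\bq$, I get $A_\bq(\bv) = A_\bq(\mcP^{\bot}_\bq(\bv)) \in \mcD^{\bot}_\bq$, and applying $\mbQ_\bq$ recovers $\mcP^{\bot}_\bq(\bv)$; thus $\mbQ_\bq \circ \mbF\mcR^{\sharp_{\mbG}}_\bq = \mcP^{\bot}_\bq$. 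Smoothness of $\mbQ_\bq$ in $\bq$, needed if one wishes to read $\mbQ$ as a bundle map downstream, follows from smoothness of $\mcR$ and $\mbG$ together with the constant rank $m$ of $A_\bq$.

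The main obstacle is the image identity $\mbox{im}(A_\bq) = \mcD^{\bot}_\bq$: this is exactly where self-adjointness is indispensable. Without the symmetry of $\mcR$ one could only infer from rank–nullity that $\dim\mbox{im}(A_\bq) = \dim\mcD^{\bot}_\bq = m$, but not that the image is the specific subspace $\mcD^{\bot}_\bq$, and then the desired factorization through $\mcP^{\bot}$ could fail.
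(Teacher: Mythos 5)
Your proof is correct and follows essentially the same route as the paper: identify $\ker(\mbF\mcR^{\sharp_{\mbG}}_{\bq}) = \mcD_{\bq}$ and $\mbox{im}(\mbF\mcR^{\sharp_{\mbG}}_{\bq}) = \mcD^{\bot}_{\bq}$, then invert the induced isomorphism on $\mcD^{\bot}_{\bq}$ (the paper invokes the first isomorphism theorem, and its Remark \ref{Q_remark} gives the same restriction-inverse construction in a frame). Your only genuine addition is the self-adjointness computation establishing $\mbox{im}(\mbF\mcR^{\sharp_{\mbG}}_{\bq}) = (\ker \mbF\mcR^{\sharp_{\mbG}}_{\bq})^{\bot} = \mcD^{\bot}_{\bq}$, a step the paper asserts without proof.
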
 
\begin{proof}
    The friction force is a linear transformation $\mbF\mcR^{\sharp_{\mbG}}_{\bq}:T_{\bq}Q \rightarrow T_{\bq}Q$ satisfying $\mcD = \ker(\mbF\mcR^{\sharp_{\mbG}}_{\bq})$ and $\text{im}(\mbF\mcR^{\sharp_{\mbG}}_{\bq}) = \mcD^{\bot}$. Thus, by the first isomorphism theorem \cite{dummit2004abstract}, we have the result.
% \begin{center}
% % https://q.uiver.app/#q=WzAsMyxbMCwwLCJUUSJdLFsyLDAsIlxcbWF0aGNhbHtEfV57XFxib3R9Il0sWzAsMiwiXFxtYXRoY2Fse0R9XntcXGJvdH0iXSxbMCwyLCJcXG1hdGhjYWx7UH1ee1xcYm90fSIsMl0sWzAsMSwiXFxtYXRoYmJ7Rn1cXG1hdGhjYWx7Un1ee1xcc2hhcnBfe1xcbWF0aGJie0d9fX0iXSxbMiwxLCJcXG1hdGhjYWx7UX0iLDIseyJzdHlsZSI6eyJ0YWlsIjp7Im5hbWUiOiJhcnJvd2hlYWQifX19XV0=
% \[\begin{tikzcd}
% 	TQ && {\mathcal{D}^{\bot}} \\
% 	\\
% 	{\mathcal{D}^{\bot}}
% 	\arrow["{\mathcal{P}^{\bot}}"', from=1-1, to=3-1]
% 	\arrow["{\mathbb{F}\mathcal{R}^{\sharp_{\mathbb{G}}}}", from=1-1, to=1-3]
% 	\arrow["{\mathcal{Q}}"', tail reversed, from=3-1, to=1-3]
% \end{tikzcd}\]
% \end{center}
\end{proof} 
\begin{remark} \label{Q_remark}
    The map $\mbQ_{\bq}$ may be constructed as follows. Let $\Phi(\bq)$ be a frame spanning $\mcD_{\bq}\oplus\mcD^{\bot}_{\bq}$. Define linear maps in this frame with the subscript $\Phi$, so, $[\mbQ_{\Phi}] = \Phi^{-1}[\mbQ]\Phi$, $[\mcP^{\bot}_{\Phi}] = \Phi^{-1}[\mcP^{\bot}]\Phi$ and $[\mbF\mcR_{\Phi}^{\sharp_{\mbG}}] = \Phi^{-1}[\mbF\mcR^{\sharp_{\mbG}}]\Phi$ and the equation $[\mcP_{\Phi}^{\bot}]=[\mbQ_{\Phi}][\mbF\mcR_{\Phi}^{\sharp_{\mbG}}]$ is given by the block matrix equation of the form
    \begin{align}
        \underbrace{\begin{bmatrix}
            \mbO & \mbO\\
        \mbO & \mbI
        \end{bmatrix}}_{[\mcP^{\bot}_{\bq}]}  = \underbrace{\begin{bmatrix}
        \mbO & \mbO\\
        \mbO & \tl{\mbQ}
    \end{bmatrix}}_{[\mbQ_{\Phi}]}\underbrace{\begin{bmatrix}
        \mbO & \mbO\\
        \mbO & \widetilde{\mbF\mcR}
    \end{bmatrix}}_{[\mbF\mcR_{\Phi}^{\sharp_{\mbG}}]}.
    \end{align} Therefore, we set $\tl{\mbQ} = \widetilde{\mbF\mcR}^{-1}$ and perform the inverse transformation $[\mbQ_{\bq}] = \Phi[\mbQ_{\Phi}]\Phi^{-1}$ to obtain the result.
\end{remark}

\begin{theorem}[Generating Equation for the Slow Manifold] \label{thm_generating_eq}
    Let $\bh_{\eps}:\mcD \rightarrow TQ$ be the section defining the invariant manifold $\mcD_{\eps}$ with respect to the dynamics \eqref{slow_fast_eom:eq1}-\eqref{slow_fast_eom:eq3}. Let $\bh_{\eps} = \sum_{k \ge 0}\eps^{k}\bh^{(k)}$ be an expansion of the section by a formal power series. Then, the slow manifold may be recursively approximated from the following generating equation 
    \begin{align} \label{generating_eq} \notag
    &\bh_{\eps}(\bv^{\mcD}) = \eps \mbQ_{\bq}\circ\left[(\Conn{}{\bv^{\mcD}}{\mcP^{\bot}})(\bv^{\mcD}) - \mcP^{\bot}(\bdd V^{\sharp_{\mbG}}) \right] \\ \notag
    &\small + \eps\mbQ_{\bq}\circ\left[ (\Conn{}{\bv^{\mcD}}{\mcP^{\bot}})(\bh_{\eps}) + (\Conn{}{\bh_{\eps}}{\mcP^{\bot}})(\bv^{\mcD}) + (\Conn{}{\bh_{\eps}}{\mcP^{\bot}})(\bh_{\eps}) \right] \\ \notag
    &\small- \eps\mbQ_{\bq}\circ D^{\mcV}\bh_{\eps}\left[(\Conn{}{\bv^{\mcD}}{\mcP})(\bv^{\mcD})  - \mcP(\bdd V^{\sharp_{\mbG}}) \right] \\
    &\small - \eps\mbQ_{\bq}\circ D^{\mcV}\bh_{\eps}\left[(\Conn{}{\bh_{\eps}}{\mcP})(\bv^{\mcD})  + (\Conn{}{\bv^{\mcD}}{\mcP})(\bh_{\eps}) + (\Conn{}{\bh_{\eps}}{\mcP})(\bh_{\eps})  \right] \\ \notag
    &\quad \quad \quad - \eps \mbQ_{\bq}\circ\left[\Conn{\mcH}{\bv_{\mcD}}{\bh_{\eps}}(\bv^{\mcD}) + \Conn{\mcH}{\bh_{\eps}}{\bh_{\eps}}(\bv^{\mcD}) \right].
\end{align}
\end{theorem}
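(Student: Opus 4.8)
The plan is to derive the generating equation \eqref{generating_eq} by projecting the equations of motion \eqref{slow_fast_eom:eq1} onto the orthogonal splitting $TQ \cong \mcD \oplus \mcD^{\bot}$, inverting the friction operator via Lemma \ref{prop_first_isomorphism}, and then repeatedly applying the tensorial Leibniz rule together with the nonlinear chain rule of Proposition \ref{prop_bundle_map}. First I would restrict to a velocity $\bv \in \mcD_{\eps}$, which by Proposition \ref{prop:perturbed_dist} decomposes as $\bv = \bv^{\mcD} + \bh_{\eps}(\bv^{\mcD})$ with $\bv^{\mcD} = \mcP(\bv)$ and $\mcP^{\bot}(\bv) = \bh_{\eps}(\bv^{\mcD})$. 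Applying $\mcP$ and $\mcP^{\bot}$ to \eqref{slow_fast_eom:eq1} yields the tangential equation $\mcP(\Conn{}{\bv}{\bv}) = -\mcP(\bdd V^{\sharp_{\mbG}})$ and the normal equation $\mcP^{\bot}(\Conn{}{\bv}{\bv}) = -\mcP^{\bot}(\bdd V^{\sharp_{\mbG}}) - \tfrac{1}{\eps}\mbF\mcR(\bv)^{\sharp_{\mbG}}$, where I use that $\text{im}(\mbF\mcR^{\sharp_{\mbG}}) = \mcD^{\bot}$ so that the friction term is unaffected by the normal projection and drops out of the tangential one. Solving the normal equation for the friction force and composing with $\mbQ_{\bq}$, the identity $\mbQ_{\bq}\circ\mbF\mcR^{\sharp_{\mbG}} = \mcP^{\bot}$ from Lemma \ref{prop_first_isomorphism} collapses the left-hand side to $\mcP^{\bot}(\bv) = \bh_{\eps}(\bv^{\mcD})$, giving
\[\bh_{\eps} = -\eps\,\mbQ_{\bq}\circ\mcP^{\bot}(\Conn{}{\bv}{\bv}) - \eps\,\mbQ_{\bq}\circ\mcP^{\bot}(\bdd V^{\sharp_{\mbG}}).\]

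The remaining work is to unfold $\mcP^{\bot}(\Conn{}{\bv}{\bv})$ into the geometric quantities appearing in \eqref{generating_eq}. Using the Leibniz rule for the $(1,1)$-tensor $\mcP^{\bot}$, I would write $\mcP^{\bot}(\Conn{}{\bv}{\bv}) = \Conn{}{\bv}{(\mcP^{\bot}(\bv))} - (\Conn{}{\bv}{\mcP^{\bot}})(\bv)$, so that the second term directly contributes $+\eps\mbQ_{\bq}(\Conn{}{\bv}{\mcP^{\bot}})(\bv)$; expanding this by bilinearity in the direction $\bv = \bv^{\mcD} + \bh_{\eps}$ produces exactly the four $\mcP^{\bot}$-terms in the first two lines of \eqref{generating_eq}, alongside the $-\eps\mbQ_{\bq}\mcP^{\bot}(\bdd V^{\sharp_{\mbG}})$ potential term. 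For the first term, on $\mcD_{\eps}$ we have $\mcP^{\bot}(\bv) = \bh_{\eps}\circ\mcP(\bv)$, so the nonlinear chain rule of Proposition \ref{prop_bundle_map} gives $\Conn{}{\bv}{(\bh_{\eps}\circ\bv^{\mcD})} = \Conn{\mcH}{\bv}{\bh_{\eps}}(\bv^{\mcD}) + D^{\mcV}\bh_{\eps}(\Conn{}{\bv}{\bv^{\mcD}})$, splitting the contribution into a horizontal piece and a vertical-Jacobian piece.

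Finally I would reduce $\Conn{}{\bv}{\bv^{\mcD}} = \Conn{}{\bv}{(\mcP(\bv))}$ by the Leibniz rule into $(\Conn{}{\bv}{\mcP})(\bv) + \mcP(\Conn{}{\bv}{\bv})$ and substitute the tangential equation $\mcP(\Conn{}{\bv}{\bv}) = -\mcP(\bdd V^{\sharp_{\mbG}})$; expanding $(\Conn{}{\bv}{\mcP})(\bv)$ bilinearly in $\bv = \bv^{\mcD} + \bh_{\eps}$ then yields precisely the two $D^{\mcV}\bh_{\eps}$-lines of \eqref{generating_eq}, while splitting $\Conn{\mcH}{\bv}{\bh_{\eps}}(\bv^{\mcD})$ linearly in its direction argument gives the final horizontal-connection line. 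Collecting all contributions under the common factor $-\eps\mbQ_{\bq}$ reproduces \eqref{generating_eq}. I expect the main obstacle to be the bookkeeping forced by Proposition \ref{prop_bundle_map}: since $\bh_{\eps}$ is not fibrewise linear, the covariant derivative of $\mcP^{\bot}(\bv) = \bh_{\eps}\circ\bv^{\mcD}$ does not obey the ordinary tensorial product rule but separates into the horizontal connection $\Conn{\mcH}{\bullet}{\bh_{\eps}}$ and the vertical Jacobian $D^{\mcV}\bh_{\eps}$, and care is needed to track which covariant derivatives pass through $D^{\mcV}\bh_{\eps}$ (those arising from $\Conn{}{\bv}{\bv^{\mcD}}$) versus which appear directly (the $\mcP^{\bot}$-terms), as well as to verify that the directional bilinearity of $\Conn{}{\bullet}{\mcP}$, $\Conn{}{\bullet}{\mcP^{\bot}}$ and $\Conn{\mcH}{\bullet}{\bh_{\eps}}$ genuinely separates the $\bv^{\mcD}$ and $\bh_{\eps}$ arguments into the grouping displayed in \eqref{generating_eq}.
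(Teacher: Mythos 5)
Your proposal is correct and takes essentially the same approach as the paper: where the paper covariantly differentiates the invariance relation $\mcP^{\bot}(\bv)=\bh_{\eps}\circ\mcP(\bv)$ along the flow and then substitutes the dynamics, you project the dynamics onto $\mcD^{\bot}$ first, invert via $\mbQ_{\bq}$, and then substitute the differentiated invariance relation --- an algebraic reordering of the identical computation, built on the same ingredients (Leibniz rule for $\mcP$ and $\mcP^{\bot}$, the nonlinear chain rule of Proposition \ref{prop_bundle_map}, Lemma \ref{prop_first_isomorphism}, the kernel property of the friction force, and bilinear expansion in $\bv=\bv^{\mcD}+\bh_{\eps}$). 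All signs and the grouping of terms check out against \eqref{generating_eq}.
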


\begin{proof}
    By analogy with the ideal nonholonomic case \cite{lewis2000simple}, take the covariant derivative of the latter invariant manifold equation \eqref{slow_fast_eom:eq3} 
\begin{align}
    \Conn{}{\bv}{\left(\mcP^{\bot}(\bv)\right)} = \Conn{}{\bv}{\left(\bh_{\eps}\circ\mcP(\bv)\right)}
\end{align} and substitute the affine connection system \eqref{slow_fast_eom:eq1}-\eqref{slow_fast_eom:eq2}. The left-hand side is given by 
\begin{align} \notag
    \Conn{}{\bv}{\left(\mcP^\bot(\bv) \right)}&= \left(\conn{}_{\bv}\mcP^\bot \right)(\bv) + \mcP^\bot(\conn{}_{\bv}\bv)\\ \notag
    &= \left(\conn{}_{\bv}\mcP^\bot \right)(\bv)  -\frac{1}{\eps}\mbF\mcR(\bv)^{\sharp_{\mbG}} - \mcP^\bot\left(\bdd V^{\sharp_{\mbG}} \right).   \notag
\end{align} On the other hand, according to Proposition \ref{prop_bundle_map} and the chain rule, we find the right-hand side 
\begin{align} \notag
&\Conn{}{\bv}{\left(\bh_\eps\circ\mcP(\bv) \right)}= D^{\mcV}\bh_{\eps}\left(\Conn{}{\bv}{\left(\mcP(\bv)\right)} \right)  + \Conn{\mcH}{\bv}{\bh_{\eps}\left(\mcP(\bv)\right)}\\ \notag
&\small = D^{\mcV}\bh_{\eps}\left(\left(\Conn{}{\bv}{\mcP}\right)(\bv) + \mcP\left(\Conn{}{\bv}{\bv}\right) \right)  + \Conn{\mcH}{\bv}{\bh_{\eps}\left(\mcP(\bv)\right)}\\ \notag
&\small = D^{\mcV}\bh_{\eps}\left((\Conn{}{\bv}{\mcP})(\bv) \right)-D^{\mcV}\bh_{\eps}\left(\mcP\left(\bdd V^{\sharp_{\mbG}}\right)\right)  + \Conn{\mcH}{\bv}{\bh_{\eps}(\mcP(\bv))}. \notag
\end{align} Putting both sides of this calculation together, we find the following identity for the dissipative forces 
\begin{align}
\label{friction_force} 
    &-\frac{1}{\eps}\mbF\mcR^{\sharp_{\mbG}}(\bv) = -(\Conn{}{\bv}{\mcP^{\bot}})(\bv) + \mcP^{\bot}(\bdd V^{\sharp_{\mbG}})\\ \notag
    &+ D^{\mcV}\bh_{\eps}\left((\Conn{}{\bv}{\mcP})(\bv) \right) + \Conn{\mcH}{\bv}{\bh_{\eps}(\mcP(\bv))} - D^{\mcV}\bh_{\eps}\left(\mcP(\bdd V^{\sharp_{\mbG}})\right)
\end{align}  

Lastly, we restrict \eqref{friction_force} to the velocities $\bv \in \mcD_{\eps}$ to obtain the generating equation for $\bh_{\eps}$
\begin{align} \label{invariance_equation} \notag
    \mbF\mcR^{\sharp_{\mbG}}\bh_{\eps}(\bv^{\mcD})&= \eps \left(\conn{}_{\bv^{\mcD} + \bh_{\eps}} \mcP^{\bot}\right)(\bv^{\mcD} + \bh_{\eps}) + \eps\mcP^{\bot}(\bdd V^{\sharp_{\mbG}})\\ \notag
    & - \eps D^{\mcV}\bh_{\eps}(-\bdd V^{\sharp_{\mbG}})  - \eps \left(\conn{\mcH}_{\bv^{\mcD} + \bh_{\eps}}\bh_{\eps} \right)(\bv^{\mcD}) \\ 
    & +\eps D^{\mcV}\bh_{\eps}\left( (\Conn{}{\bv^{\mcD} + \bh_{\eps}}{\mcP})(\bv^{\mcD} + \bh_{\eps}) \right).
\end{align} By Lemma \ref{prop_first_isomorphism} and linearity we obtain the generating equation \eqref{generating_eq}.
\end{proof}
\begin{remark}
    Note that there is an $\eps$ in all terms on the right-hand side of \eqref{generating_eq}. Thus, any power series expansion of $\bh_\eps$ results in an order higher on the right-hand side than the left-hand side. Therefore, the generating equation \eqref{generating_eq} provides a recurrence relation for the approximation of $\bh_{\eps}$.
\end{remark}
 
\begin{theorem}[Equivalence of Reaction Forces] \label{thm_equiv_forces} The generating equation for $\bh_{\eps}(\bv^{\mcD})$ in \eqref{generating_eq} results in the same reaction forces $-\frac{1}{\eps}\mbF\mcR^{\sharp_{\mbG}}\bh_{\eps}(\bv^{\mcD})$ as those reaction forces resulting from the classical invariance condition 
\begin{align}\label{classical_invariance_equation}
    \dd{}{t}\mcP^{\bot}(\bv) \big|_{\mcD_{\eps}} = \dd{}{t}\bh_{\eps}\circ\mcP(\bv) \big|_{\mcD_{\eps}}.
\end{align}
\end{theorem}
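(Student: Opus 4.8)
The plan is to extract the reaction force $-\frac{1}{\eps}\mbF\mcR^{\sharp_{\mbG}}(\bv)$ from each of the two invariance conditions and show that the resulting expressions agree on $\mcD_{\eps}$. The central tool is the pointwise identity relating the covariant derivative along an integral curve to the ordinary time derivative: for any vector-valued function $\bY(t)$ along a curve with velocity $\bv$, the matrix form \eqref{conn_vec_matrix} gives $\Conn{}{\bv}{\bY} = \dd{}{t}\bY + \Gamma(\bq,\bv)\bY$, since the Jacobian contraction $[\partial\bY/\partial\bq]\bv$ is exactly $\dd{}{t}\bY$ along the curve. The strategy is to apply this to both defining pieces of the slow manifold and verify that the discrepancy between the covariant and classical formulations is a Christoffel contribution that vanishes identically on $\mcD_{\eps}$.

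First I would record the covariant side, which is just a repackaging of \eqref{friction_force}: expanding $\Conn{}{\bv}{\mcP^{\bot}(\bv)}$ by Leibniz, substituting the equation of motion \eqref{slow_fast_eom:eq1}, and using that $\mbF\mcR^{\sharp_{\mbG}}(\bv)\in\mcD^{\bot}$ is fixed by $\mcP^{\bot}$, one obtains the compact form
\[
-\frac{1}{\eps}\mbF\mcR^{\sharp_{\mbG}}(\bv) = -(\Conn{}{\bv}{\mcP^{\bot}})(\bv) + \mcP^{\bot}(\bdd V^{\sharp_{\mbG}}) + \Conn{}{\bv}{(\bh_{\eps}\circ\mcP(\bv))}.
\]
Next I would run the same computation for the classical condition \eqref{classical_invariance_equation}, this time using the product rule and the coordinate acceleration $\dd{}{t}\bv = \Conn{}{\bv}{\bv} - \Gamma(\bq,\bv)\bv$ drawn from \eqref{slow_fast_eom:eq1}, which yields
\[
-\frac{1}{\eps}\mbF\mcR^{\sharp_{\mbG}}(\bv) = -(\dd{}{t}\mcP^{\bot})(\bv) + \mcP^{\bot}(\bdd V^{\sharp_{\mbG}}) + \mcP^{\bot}(\Gamma(\bq,\bv)\bv) + \dd{}{t}(\bh_{\eps}\circ\mcP(\bv)).
\]

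The comparison is then direct. Substituting the tensor identity $(\Conn{}{\bv}{\mcP^{\bot}})(\bv) = (\dd{}{t}\mcP^{\bot})(\bv) + \Gamma(\bq,\bv)\mcP^{\bot}(\bv) - \mcP^{\bot}(\Gamma(\bq,\bv)\bv)$ read off from \eqref{conn_tensor_matrix}, together with the vector identity $\Conn{}{\bv}{(\bh_{\eps}\circ\mcP(\bv))} = \dd{}{t}(\bh_{\eps}\circ\mcP(\bv)) + \Gamma(\bq,\bv)\,\bh_{\eps}\circ\mcP(\bv)$, every time-derivative and potential term matches in the difference of the two expressions, the stray term $\mcP^{\bot}(\Gamma(\bq,\bv)\bv)$ cancels against its counterpart, and the residual collapses to $\Gamma(\bq,\bv)\bigl(\bh_{\eps}\circ\mcP(\bv) - \mcP^{\bot}(\bv)\bigr)$. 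By Lemma \ref{prop:perturbed_dist} this quantity vanishes on $\mcD_{\eps}$, so the two reaction forces coincide there.

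The main obstacle is the bookkeeping of Christoffel contributions arising from three distinct sources: the covariant derivative of the $(1,1)$-tensor $\mcP^{\bot}$, the covariant derivative of the composite field $\bh_{\eps}\circ\mcP(\bv)$ (which in principle invokes the nonlinear chain rule of Proposition \ref{prop_bundle_map}, though only its total relation to $\dd{}{t}$ is needed here), and the conversion of $\Conn{}{\bv}{\bv}$ to the coordinate acceleration $\dd{}{t}\bv$. One must verify that the term $\mcP^{\bot}(\Gamma(\bq,\bv)\bv)$ produced on the classical side is exactly annihilated by the matching term in the tensor identity for $\mcP^{\bot}$, so that the leftover is genuinely $\Gamma(\bq,\bv)$ applied to the manifold-defining quantity $\mcP^{\bot}(\bv) - \bh_{\eps}\circ\mcP(\bv)$ and no residual survives on $\mcD_{\eps}$; that $\mbF\mcR^{\sharp_{\mbG}}(\bv)$ lies in $\mcD^{\bot}$ and is therefore fixed by $\mcP^{\bot}$ is precisely what makes this cancellation close.
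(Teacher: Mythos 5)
Your proposal is correct and follows essentially the same route as the paper: extract $-\frac{1}{\eps}\mbF\mcR^{\sharp_{\mbG}}(\bv)$ from each invariance condition along the on-shell dynamics and show the two expressions differ only by $\Gamma(\bq,\bv)\bigl(\bh_{\eps}\circ\mcP(\bv)-\mcP^{\bot}(\bv)\bigr)$, which vanishes on $\mcD_{\eps}$ by Lemma \ref{prop:perturbed_dist} --- precisely the cancellation the paper carries out term-by-term between the matrix expressions \eqref{cov_fric} and \eqref{classical_invariance_forces}, likewise relying on $\mathrm{im}(\mbF\mcR^{\sharp_{\mbG}})=\mcD^{\bot}$ and $\ker(\mbF\mcR^{\sharp_{\mbG}})=\mcD$. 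Your execution is marginally cleaner in two respects: you keep the composite $\bh_{\eps}\circ\mcP(\bv)$ unexpanded via the along-the-curve identity $\Conn{}{\bv}{\bY}=\dot{\bY}+\Gamma(\bq,\bv)\bY$, so the chain-rule bookkeeping in which the paper's terms $\pm[D^{\mcV}\bh_{\eps}][\bGamma(\bq,\bv)][\mcP]\bv$ cancel never appears explicitly, and you retain a general potential $V$ where the paper assumes $V=0$ without loss of generality.
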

\begin{proof}
    Without loss of generality, assume that the potential $V=0$. The equations of motion \eqref{slow_fast_eom:eq1}-\eqref{slow_fast_eom:eq3} can be written as 
    \begin{align}
        &\dot{\bv} = -\bGamma(\bq,\bv)\bv - \frac{1}{\eps}\mbF\mcR^{\sharp_{\mbG}}(\bv)\\
        &\dot{\bq}=\bv \in \mcD_{\eps} \iff \mcP^{\bot}(\bv) = \bh_{\eps}\circ\mcP(\bv).
    \end{align} Taking a Lie derivative of \eqref{slow_fast_eom:eq3} along the along these dynamics we find 
    \begin{align} \notag
        &\left[\sum^{n}_{k=1}\pp{\mcP^{\bot}}{q^{k}}v^{k}\right]\bv - [\mcP^{\bot}][\bGamma(\bq,\bv)]\bv - \frac{1}{\eps}[\mcP^{\bot}][\mbF\mcR^{\sharp_{\mbG}}]\bv = \\ \notag
        &\left[\pp{\bh_{\eps}}{\bq}(\bq,\mcP(\bv))\right]\bv + [D^{\mcV}\bh_{\eps}]\left[\sum^{n}_{k=1}\pp{\mcP}{q^k}v^{k}\right]\bv\\ \notag
        &\quad - [D^{\mcV}\bh_{\eps}][\mcP][\bGamma(\bq,\bv)]\bv - \frac{1}{\eps}[D^{\mcV}\bh_{\eps}][\mcP][\mbF\mcR^{\sharp_{\mbG}}]\bv. \notag
    \end{align} where the quantities in the square brackets are the compact matrix notation introduced in Remark \ref{covariant_remark}. As $\mcD = \ker\mbF\mcR^{\sharp_\mbG}$, the last term vanishes. Let $\bv = \bv^{\mcD_{\eps}} \in \mcD_{\eps}$ be a velocity in the slow manifold, then the reaction forces are restricted to $\mcD_{\eps}$ are given by 
    \begin{align}\label{classical_invariance_forces}
        &\small-\frac{1}{\eps}\mbF\mcR^{\sharp_{\mbG}}(\bv^{\mcD_{\eps}}) = -\left[\sum^{n}_{k=1}\pp{\mcP^{\bot}}{q^k}v^{k}_{\mcD_{\eps}}\right]\bv^{\mcD_{\eps}} +[\mcP^{\bot}][\bGamma(\bq,\bv^{\mcD_{\eps}})]\bv^{\mcD_{\eps}}\\ \notag
        & \quad \quad \small +\left[\pp{\bh_{\eps}}{\bq}(\bq,\bv^{\mcD})\right]\bv^{\mcD_{\eps}} + [D^{\mcV}\bh_{\eps}]\left[\sum^{n}_{k=1}\pp{\mcP}{q^k}v^{k}_{\mcD_{\eps}}\right]\bv^{\mcD_{\eps}}- [D^{\mcV}\bh_{\eps}][\mcP] [\bGamma(\bq,\bv^{\mcD_{\eps}})]\bv^{\mcD_{\eps}}.
    \end{align} Contrastingly, the reaction forces $\bh_{\eps}$ resulting from the generating equation \eqref{generating_eq} are calculated from the equation
    \begin{align}\notag
        \small-\frac{1}{\eps}\mbF\mcR^{\sharp_{\mbG}}(\bv) = -(\Conn{}{\dot{\bq}}{\mcP^{\bot}})(\bv) + D^{\mcV}\bh_{\eps}(\Conn{}{\dot{\bq}}{\mcP})(\bv) + \Conn{\mcH}{\dot{\bq}}{\bh_{\eps}(\mcP(\bv))}.
    \end{align} as was observed in equation \eqref{friction_force}. By Remark \ref{covariant_remark} the covariant derivatives in this equation may be expressed in matrix form by 
    \begin{align} \label{cov_fric} \notag
        &\small -\frac{1}{\eps}\mbF\mcR^{\sharp_{\mbG}}(\bv) = -\left[\sum^{n}_{k=1}\pp{\mcP^{\bot}}{q^k}v^{k}\right]\bv - [\bGamma(\bq,\bv)][\mcP^{\bot}]\bv + [\mcP^{\bot}][\bGamma(\bq,\bv)]\bv\\ \notag
        &\small + [D^{\mcV}\bh_{\eps}]\left[\sum^{n}_{k=1}\pp{\mcP}{q^k}v^{k}\right]\bv + D^{\mcV}\bh_{\eps}[\bGamma(\bq,\bv)][\mcP]\bv -[D^{\mcV}\bh_{\eps}][\mcP][\bGamma(\bq,\bv)]\bv\\ 
        &\small + \left[\pp{\bh_{\eps}}{\bq}(\bq,\bv^{\mcD})\right]\bv + [\bGamma(\bq,\bv)]\bh_{\eps}(\bq,\bv^{\mcD})- [D^{\mcV}\bh_{\eps}][\bGamma(\bq,\bv)][\mcP]\bv.
    \end{align} Then, restricting to $\bv = \bv^{\mcD_{\eps}}$, and using $\mcP^{\bot}\circ \bh_{\eps} = \bh_{\eps}$ terms cancel and therefore, these friction forces \eqref{cov_fric} restricted to $\mcD_{\eps}$ are identical to \eqref{classical_invariance_forces}.
\end{proof}

\begin{corollary}[Equivalence of Tangent Conditions]\label{thm_dynamics_tangent} The dynamics defined by \eqref{slow_fast_eom:eq1}-\eqref{slow_fast_eom:eq3} restricted to $\mcD_{\eps}$ are tangent to the manifold $\mcD_{\eps}$.
\end{corollary}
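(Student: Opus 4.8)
The plan is to exploit that the slow manifold is a regular level set of a nonlinear projection and that \emph{tangency of the dynamics} is precisely the vanishing of the total derivative of that defining map along trajectories, which is exactly the classical invariance condition already discharged in Theorem \ref{thm_equiv_forces}.

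First I would recall from Lemma \ref{prop_nonlin_proj_comp} that $\mcD_{\eps}$ is the zero set of the complementary nonlinear projection $\bH^{\bot}_{\eps}=\mathrm{id}_{TQ}-\bH_{\eps}$, equivalently (Lemma \ref{prop:perturbed_dist}) the locus where $\mcP^{\bot}(\bv)-\bh_{\eps}\circ\mcP(\bv)=\bzero$. Writing the dynamics as the second-order vector field on $TQ$ from \eqref{geodesic_spray}, whose integral curves $(\bq(t),\bv(t))$ solve \eqref{slow_fast_eom:eq1}-\eqref{slow_fast_eom:eq2}, tangency of this field to $\mcD_{\eps}$ at a point $\bv^{\mcD_{\eps}}\in\mcD_{\eps}$ means that the field lies in $\ker d\bH^{\bot}_{\eps}$ there, i.e. $\frac{d}{dt}\bH^{\bot}_{\eps}(\bv(t))=\bzero$ for any trajectory issuing from $\bv^{\mcD_{\eps}}$.

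Next, I would expand this derivative by linearity as $\frac{d}{dt}\mcP^{\bot}(\bv)-\frac{d}{dt}\bh_{\eps}\circ\mcP(\bv)$, so that the tangency requirement is precisely the classical invariance condition \eqref{classical_invariance_equation}. The key input is then Theorem \ref{thm_equiv_forces}: the section $\bh_{\eps}$ produced by the generating equation \eqref{generating_eq} yields, when restricted to $\mcD_{\eps}$, exactly the reaction force dictated by \eqref{classical_invariance_equation}. Hence this $\bh_{\eps}$ satisfies \eqref{classical_invariance_equation}, so $\frac{d}{dt}\bH^{\bot}_{\eps}\big|_{\mcD_{\eps}}=\bzero$ and the dynamical vector field is tangent to $\mcD_{\eps}$.

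I expect the only delicate point -- and thus something to state carefully rather than a genuine obstacle -- to be the identification of ``vanishing total derivative of $\bH^{\bot}_{\eps}$'' with ``tangency.'' Since $\bH^{\bot}_{\eps}$ is a nonlinear projection with $\ker\bH^{\bot}_{\eps}=\mcD_{\eps}$ and image complementary to $\mcD_{\eps}$ (Lemmas \ref{prop_nonlin_proj}-\ref{prop_nonlin_proj_comp}), its differential restricted to $\mcD_{\eps}$ surjects onto the complementary directions, making $\frac{d}{dt}\bH^{\bot}_{\eps}\big|_{\mcD_{\eps}}=\bzero$ both necessary and sufficient for tangency. All the analytic content was already expended in Theorem \ref{thm_equiv_forces}; the remaining work is purely to articulate that the equation determining $\bh_{\eps}$ and the geometric tangency condition on the dynamics are one and the same.
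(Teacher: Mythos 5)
Your proposal is correct and follows essentially the same route as the paper: both arguments reduce tangency to the classical invariance condition \eqref{classical_invariance_equation} and then discharge it via Theorem \ref{thm_equiv_forces}, which identifies the reaction forces from the generating equation \eqref{generating_eq} with those from the classical condition on $\mcD_{\eps}$. The only difference is cosmetic -- where the paper appeals to standard singular perturbation theory for the equivalence of ``invariance condition holds'' and ``dynamics tangent to $\mcD_{\eps}$,'' you make this explicit by realizing $\mcD_{\eps}$ as the regular zero set of $\bH^{\bot}_{\eps}$ from Lemmas \ref{prop_nonlin_proj_comp} and \ref{prop:perturbed_dist}, which is a sound (and slightly more self-contained) way to justify the same step.
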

\begin{proof}
    In local coordinates \eqref{slow_fast_eom:eq1}-\eqref{slow_fast_eom:eq3} are equivalent to the equations of motion on $TTQ$ given by \eqref{geodesic_spray}. Following singular perturbation theory, the classical invariance equation \eqref{classical_invariance_equation} ensures that the reaction forces result in dynamics tangent to $\mcD_{\eps}$.
    By Theorem \ref{thm_equiv_forces}, the reaction friction forces described by \eqref{generating_eq} are equivalent to the reaction forces resulting from the classical invariance equation. Therefore, the dynamics \eqref{slow_fast_eom:eq1}-\eqref{slow_fast_eom:eq3} along with the generating equation \eqref{generating_eq} result in dynamics tangent to $\mcD_{\eps}$. 
\end{proof}

\begin{corollary}[Equivalence of Invariance Conditions]
    The classical invariance condition \eqref{classical_invariance_equation} for the dynamics \eqref{slow_fast_eom:eq1}-\eqref{slow_fast_eom:eq3}
is equivalent to the condition 
\begin{align}\label{cov_invariance_condition}
    \Conn{}{\dot{\bq}}{(\mcP^{\bot}(\bv))}\big|_{\mcD_{\eps}} = \Conn{}{\dot{\bq}}{(\bh_{\eps}\circ\mcP(\bv))}\big|_{\mcD_{\eps}}.
\end{align}
\end{corollary}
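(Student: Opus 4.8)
The plan is to establish the equivalence between the classical invariance condition \eqref{classical_invariance_equation}, stated in terms of ordinary time derivatives of the projection components, and the proposed covariant invariance condition \eqref{cov_invariance_condition}. The natural strategy is to exploit the relationship between the covariant derivative along a curve and the ordinary time derivative, mediated by the Christoffel symbols. Specifically, for any section-valued quantity evaluated along the trajectory, the covariant derivative $\Conn{}{\dot{\bq}}{(\cdot)}$ differs from $\dd{}{t}(\cdot)$ precisely by a term of the form $[\bGamma(\bq,\dot{\bq})](\cdot)$, using the matrix notation of Remark \ref{covariant_remark}. The key observation is that this additional Christoffel term is \emph{the same} on both sides of the invariance condition, provided the arguments agree.

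First I would write both sides of \eqref{cov_invariance_condition} explicitly along the dynamics. Applying the covariant derivative to $\mcP^{\bot}(\bv)$ yields $\dd{}{t}\mcP^{\bot}(\bv) + [\bGamma(\bq,\dot{\bq})]\mcP^{\bot}(\bv)$ in matrix form, while the right-hand side applied to $\bh_{\eps}\circ\mcP(\bv)$ yields $\dd{}{t}(\bh_{\eps}\circ\mcP(\bv)) + [\bGamma(\bq,\dot{\bq})](\bh_{\eps}\circ\mcP(\bv))$. Subtracting, the equivalence of the two conditions reduces to showing that the discrepancy between the covariant correction terms, namely $[\bGamma(\bq,\dot{\bq})]\left(\mcP^{\bot}(\bv) - \bh_{\eps}\circ\mcP(\bv)\right)$, vanishes when restricted to $\mcD_{\eps}$. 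But this is immediate from Lemma \ref{prop:perturbed_dist}: on the slow manifold $\mcD_{\eps}$ we have exactly $\mcP^{\bot}(\bv) = \bh_{\eps}\circ\mcP(\bv)$, so the argument of the Christoffel term is zero. Hence the two Christoffel corrections coincide identically on $\mcD_{\eps}$, and the covariant condition holds if and only if the classical time-derivative condition holds.

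An alternative, and perhaps cleaner, route is to invoke Theorem \ref{thm_equiv_forces} directly. That theorem already establishes that the generating equation \eqref{generating_eq}, which was derived precisely by taking the covariant derivative of the invariance relation \eqref{slow_fast_eom:eq3}, produces the same reaction forces as the classical invariance condition \eqref{classical_invariance_equation}. Since the covariant invariance condition \eqref{cov_invariance_condition} is the coordinate-free statement from which \eqref{generating_eq} was obtained, and since both conditions are restricted to $\mcD_{\eps}$ where the Christoffel discrepancy vanishes, the equivalence follows from the matrix computations already carried out in the proof of Theorem \ref{thm_equiv_forces}, particularly equation \eqref{cov_fric} versus \eqref{classical_invariance_forces}. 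I would state the proof as a short corollary of these matrix identities, noting that the terms $-[\bGamma(\bq,\bv)][\mcP^{\bot}]\bv + [\bGamma(\bq,\bv)]\bh_{\eps}(\bq,\bv^{\mcD})$ appearing in \eqref{cov_fric} are exactly the Christoffel corrections that cancel upon restriction.

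The main obstacle I anticipate is bookkeeping rather than conceptual: one must be careful that the arguments $\mcP^{\bot}(\bv)$ and $\bh_{\eps}\circ\mcP(\bv)$ enter the covariant derivative correction term with the \emph{same} multiplying Christoffel matrix $[\bGamma(\bq,\dot{\bq})]$, which requires that $\dot{\bq}=\bv$ is common to both covariant derivatives — this holds by \eqref{slow_fast_eom:eq2}. A secondary subtlety is that $\bh_{\eps}\circ\mcP(\bv)$ is a nonlinear bundle map, so the covariant derivative of the right-hand side must be handled via the generalized chain rule of Proposition \ref{prop_bundle_map} rather than the linear formula; however, since we only need the difference of the two sides and we restrict to $\mcD_{\eps}$ where the arguments coincide, the nonlinear vertical-derivative contributions also match, and the reduction to the vanishing of $[\bGamma(\bq,\bv)]\left(\mcP^{\bot}(\bv) - \bh_{\eps}\circ\mcP(\bv)\right)$ remains valid.
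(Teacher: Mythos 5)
Your proposal is correct, and your primary argument takes a genuinely different route from the paper's. The paper proves this corollary in one line by citing Theorem \ref{thm_equiv_forces} together with Corollary \ref{thm_dynamics_tangent}: since both invariance conditions yield the same reaction forces, they yield the same dynamics restricted to $\mcD_{\eps}$, hence are equivalent. Your first argument is instead a direct pointwise computation: both $\mcP^{\bot}(\bv)$ and $\bh_{\eps}\circ\mcP(\bv)$ are vector fields along the trajectory, so each covariant derivative differs from the corresponding time derivative by the same correction $[\bGamma(\bq,\dot{\bq})]$ applied to the field, and the difference between the two invariance conditions collapses to $[\bGamma(\bq,\dot{\bq})]\left(\mcP^{\bot}(\bv)-\bh_{\eps}\circ\mcP(\bv)\right)$, which vanishes on $\mcD_{\eps}$ by Lemma \ref{prop:perturbed_dist}. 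This is more elementary and more transparent than the paper's route: it isolates exactly where the two conditions could disagree and shows they cannot, without invoking the lengthy matrix computation behind Theorem \ref{thm_equiv_forces}, and it exposes the geometric content (the covariant condition is the classical one plus a Christoffel correction that the defining relation of the slow manifold annihilates). The paper's route, by contrast, buys reuse of already-proven machinery and the dynamical interpretation via equality of forces and tangency; your second paragraph, comparing \eqref{cov_fric} with \eqref{classical_invariance_forces} through Theorem \ref{thm_equiv_forces}, is essentially that proof. One simplification worth making explicit if you write this up: the identity $\Conn{}{\dot{\bq}}{\bY} = \dd{}{t}\bY + [\bGamma(\bq,\dot{\bq})]\bY$ holds for \emph{any} vector field $\bY$ along the curve, including the nonlinear composite $\bh_{\eps}\circ\mcP(\bv)$, since along a trajectory this composite is simply a $T_{\bq(t)}Q$-valued curve; so you do not actually need the chain rule of Proposition \ref{prop_bundle_map} here at all, and your closing caveat about vertical-derivative contributions resolves itself for that reason.
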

\begin{proof}
    By Theorem \ref{thm_equiv_forces} and Corollary \ref{thm_dynamics_tangent} both conditions result in the same dynamical equations of motion restricted to the slow manifold $\mcD_{\eps}$. 
\end{proof}
\begin{remark}
    The later covariant derivative invariance condition \eqref{cov_invariance_condition} may be geometrically interpreted as the parallel transport of the condition $\mcP^{\bot}(\bv) = \bh_{\eps}\circ\mcP(\bv)$ along trajectories $\dot{\bq} = \bv \in \mcD_{\eps}$ lying in the slow manifold.
\end{remark}

\subsection{Approximation of Slip Velocities}
In this section, we employ the generating equation \eqref{generating_eq} from Theorem \ref{thm_generating_eq} to compute the zeroth, first and second order approximations of the slow manifold $\bh_{\eps}$. Expand $\bh_{\eps}$ as a formal power series 
\begin{align}
    \bh_{\eps} = \bh^{(0)} + \eps\bh^{(1)} + \eps^{2}\bh^{(2)} +  O(\eps^3).
\end{align} Further expanding the power series allows for higher order approximations of $\mcD_{\eps}$.
\subsubsection{Zeroth Order Approximation of Slip Velocities} Lets consider the zeroth order term of the slow manifold $\bh_{\eps} = \bh^{(0)} + O(\eps)$. Observe that when $\eps = 0$ we have 
$$\mcD_\eps \big |_{\eps = 0} = \mcD_0  = \{\bv \in TQ \, | \, \mcP^{\bot}(\bv) = \bh^{(0)}\}.$$ Since $\bh_{\eps}$ is a perturbation of the nonholonomic distribution, we find that $\bh^{(0)} = \bzero$. Then, to find the zeroth order velocity,  solve the identity $\mcP^{\bot}(\bv) = \bzero$ for $\bv = \bv^{\mcD} \in \mcD$.
\begin{remark}
    The zeroth order approximation of the slow fast velocities are equivalent to the nonholonomic velocities.  
\end{remark}

%-----------
\subsubsection{First Order Approximation of Slip Velocities} Let us assume that section $\bh_\eps$ is taken to first order, so we expand $\bh_\eps = \eps \bh^{(1)} + O(\eps^2)$. Then according to the generating equation \eqref{generating_eq} to this first order slip velocity, we find 
\begin{align} \label{first_order_slip}
    \bh^{(1)} = \mbQ_{\bq}\circ \left[(\Conn{}{\bv^{\mcD}}{\mcP^{\bot}})(\bv^{\mcD}) - \mcP^{\bot}(\bdd V^{\sharp_{\mbG}})\right].
\end{align}

\begin{remark}
    The first order approximation of the slow fast velocities are proportional to the nonholonomic Lagrange multipliers \eqref{lagrange_mult_nonholonomic}. This is consistent with previous results \cite{eldering2013normally}.
\end{remark}

%--------------
\subsubsection{Second Order Approximation of Slip Velocities} Lastly, assume that section $\bh_\eps$ is taken to second order, so we expand the formal power series as $\bh_\eps = \eps \bh^{(1)} + \eps^{2}\bh^{(2)} + O(\eps^3)$. Then according to the generating equation \eqref{generating_eq} to this first order slip velocity, we find 
\begin{align} \label{second_order_slip} 
    &\bh^{(2)} = \mbQ_{\bq}\circ \left[(\Conn{}{\bh^{(1)}}{\mcP^{\bot}})(\bv^{\mcD}) + (\Conn{}{\bv^{\mcD}}{\mcP^{\bot}})(\bh^{(1)})\right] \\ \notag
    &\quad \quad \quad -\mbQ_{\bq}\circ \left[D^{\mcV}\bh^{(1)}\left((\Conn{}{\bv^{\mcD}}{\mcP})(\bv^{\mcD}) \right) - D^{\mcV}\bh^{(1)}\,\mcP(\bdd V^{\sharp_{\mbG}}) \right]- \mbQ_{\bq}\circ \left[\Conn{\mcH}{\bv^{\mcD}}{\bh^{(1)}}(\bv^{\mcD}) \right].
\end{align}

\begin{remark}
    Higher order approximations of $\bh_{\eps}$ may be calculated recursively in terms of higher order covariant derivatives $\Conn{}{}{}^{n}$. However, in practice, first and second order approximations suffice. This is due to the fact that the friction forces $- \frac{1}{\eps}\mbF\mcR^{\sharp_{\mbG}}(\bv)$ in \eqref{slow_fast_eom:eq1} restricted to the second order approximation results in a first order force $- \eps\mbF\mcR^{\sharp_{\mbG}}(\bh^{(2)})$ that is a function of zeroth and first-order velocities. 
\end{remark}

\subsection{Approximation of Slip Dynamics}
In this section, we use the approximations of the slip velocities $\bh_{\eps} = \eps\bh^{(1)} + \eps \bh^{(2)} + O(\eps)$ to approximate the zeroth and first order dynamics. Due to the recurrence relation defined by the generating equation \eqref{generating_eq} on $\bh_{\eps}$ we observe that the zeroth order approximation of the dynamics \eqref{slow_fast_eom:eq1}-\eqref{slow_fast_eom:eq3} depends on the first order approximation $\bh^{(1)}$. Furthermore, the first order approximation of the dynamics \eqref{slow_fast_eom:eq1}-\eqref{slow_fast_eom:eq3} depends on the second order approximation $\bh^{(2)}$, and so on. Therefore, the order of the slip velocities $\bh_{\eps}$ is always one order above the order of the dynamics. 

\subsubsection{Zeroth Order Dynamics} By the equations of motion \eqref{slow_fast_eom:eq1} and the first order approximation of the slow manifold $\bh_{\eps} = \eps\bh^{(1)} + O(\eps^2)$, we find the zeroth order equations of motion are given by
\begin{align}\label{zeroth_eom}
    &\dd{\bv}{t} = -\boldsymbol{\Gamma}(\bq,\bv^{\mcD})\bv^{\mcD} - (\Conn{}{\bv^{\mcD}}{\mcP^{\bot}})(\bv^{\mcD}) - \mcP(\bdd V^{\sharp_{\mbG}})\\
    &\dd{\bq}{t}  = \bv^{\mcD} \in \mcD_{0}\cong \mcD.
\end{align}

\begin{remark}
    The zeroth order equations of motion \eqref{zeroth_eom} are exactly the nonholonomic equations of motion \eqref{nonholonomic_affine:eq1}-\eqref{nonholonomic_affine:eq3}. This result is consistent with previous realizations of slow/fast nonholonomic systems \cite{eldering2016realizing}. 
\end{remark}

\subsubsection{First Order Dynamics} Using the generating function \eqref{generating_eq} we obtain the first and second order approximations $\bh_{\eps} = \eps\bh^{(1)} + \eps^{2}\bh^{(2)} + O(\eps^3)$ to obtain \eqref{first_order_slip} and \eqref{second_order_slip}, the former gives the slip velocity and the later gives the reaction force. Restricting the dynamics to the first order approximation of slip, we find the dynamics: 

\begin{align}\label{first_eom}
    &\begin{aligned}
        \dd{\bv}{t}&= 
    -\boldsymbol{\Gamma}(\bq,\bv^{\mcD})\bv^{\mcD} - (\Conn{}{\bv^{\mcD}}{\mcP^{\bot}})(\bv^{\mcD}) - \mcP(\bdd V^{\sharp_{\mbG}})\\ \notag
    &\quad \small -\eps\left[\boldsymbol{\Gamma}(\bq,\bv^{\mcD})\bh^{(1)} +\boldsymbol{\Gamma}(\bq,\bh^{(1)})\bv^{\mcD} \right] +\eps\Conn{\mcH}{\bv^{\mcD}}{\bh^{(1)}}(\bv^{\mcD})\\ \notag
    &\quad -\eps \left[(\Conn{}{\bv^{\mcD}}{\mcP^{\bot}})(\bh^{(1)}) + (\Conn{}{\bh^{(1)}}{\mcP^{\bot}})(\bv^{\mcD}) \right]\\ 
    &\quad  + \eps \, D^{\mcV}\bh^{(1)}\left[\left((\Conn{}{\bv^{\mcD}}{\mcP})(\bv^{\mcD}) \right) - \mcP(\bdd V^{\sharp_{\mbG}}) \right]
    \end{aligned}\\
    &\dd{\bq}{t} = \bv^{\mcD} + \eps \bh^{(1)}(\bv^{\mcD})
\end{align} where the first order approximation of the slip $\bh^{(1)}$ is given by \eqref{first_order_slip}, and vertical derivative $D^{\mcV}\bh^{(1)}$ and horizontal $\Conn{\mcH}{\bv^{\mcD}}{\bh^{(1)}}$ is calculated according to equation \eqref{vertical_deriv} and \eqref{horizontal_deriv}. 

% \textcolor{red}{
% \begin{remark}
    
% \end{remark}}

%===============================================

%       Case Study

%===============================================
\section{Case Study: Vertical Rolling Disk} \label{sec_case_study}
{To illustrate the theory, we consider a vertical rolling disk for simplicity. However, our approach may be applied to other systems with (linear) nonholonomic rolling constraints such as wheeled mobile robots \cite{samson2016modeling} and vehicle systems \cite{chhabra2016dynamical}. For the traditional derivation of the nonholonomic equations for the vertical rolling disk using the Lagrange-d'Alembert variational principle see \cite[Sec. 1.4]{bloch2004nonholonomic} and for the affine-connection approach see \cite[Sec. 4.5]{bullo2019geometric}, and \cite{lewis2000simple}. 

The configuration space for a planar vertical rolling disk is $Q = SE(2)\times SO(2)$ with local coordinates $\bq = (\theta,x,y,\varphi)$ and local coordinates for velocities $\bv = (v_{\theta},v_{x},v_{y},v_{\varphi})$. The Riemannian metric defining the kinetic energy is given by 
\begin{align}\label{wheel_metric}
        \mbG_{\bq} = \text{diag}\left(I,m,m,J \right)
\end{align} where $m$ is the mass of the disk, $I$ is the moment of inertia normal to the plane and $J$ is the moment of inertia about the rolling axis. Note that as $\mbG_{\bq}$ is constant, the corresponding Christoffel symbols $\bGamma(\bq,\bv) = \bzero$ for all $(\bq,\bv) \in TQ$ \ie, the metric $\mbG_{\bq}$ is flat. 

The nonholonomic constraint one-forms 
\begin{align} \label{wheel_constriant_forms}
    a_{q}^{1} = \bdd x - R\cos \theta \bdd \varphi,\quad a_{q}^{2} = \bdd y - R\sin \theta \bdd \varphi,
\end{align} enforcing rolling without slipping are given by 
\begin{align}\label{wheel_constriants}
    &a_{q}^{1}(v_{\theta},v_{x},v_{y},v_{\varphi}) = v_x - R\cos \theta v_{\varphi} = 0,\\
    &a_{q}^{2}(v_{\theta},v_{x},v_{y},v_{\varphi}) = v_y - R\sin \theta v_{\varphi} = 0,
\end{align} where $R$ is the radius of the disk. This constraint defines a nonholonomic distribution $\mcD_{\bq} = \ker(a^{1}_{q})\cap\ker(a^{2}_{q})$ which are spanned by the vector fields 
\begin{align}\label{vf_distribution}
    \bpp{}{\theta}, \quad R\cos\theta\bpp{}{x} + R\sin\theta \bpp{}{y} + \bpp{}{\varphi}.
\end{align} Using the musical isomorphisms, we find that the $\mbG$-orthogonal complement spanned by the orthogonal complement $\mcD^{\bot} = \left(\mcD^{\circ} \right)^{\flat_{\mbG}}$ is spanned by the vector fields 
\begin{align}\label{vf_distribution_perp}
    \frac{1}{m}\bpp{}{x} -\frac{R}{J}\cos\theta\bpp{}{\varphi}, \quad \frac{1}{m}\bpp{}{y} - \frac{R}{J}\sin\theta \bpp{}{\varphi}. 
\end{align} The projection map $\mcP:TQ \rightarrow \mcD $ onto the distribution $\mcD$ is constructed by the vector fields spanned by \eqref{vf_distribution} and the metric \eqref{wheel_metric} and is given by 
\begin{align}
    \mcP = \mcI\begin{bmatrix}
        \frac{1}{\mcI}& 0 & 0 & 0 \\
        0 & \cos^2\theta & \frac{1}{2}\sin2\theta & \frac{J}{mR}\cos\theta\\
        0 & \frac{1}{2}\sin2\theta & \sin^2\theta & \frac{J}{mR}\sin\theta \\
        0 & \frac{\cos\theta}{R} & \frac{\sin\theta}{R} & \frac{J}{mR^2}
    \end{bmatrix}
\end{align} where $\mcI:= \frac{mR^2}{J+mR^2}$ is a ratio of inertia. 

Similarly, the projection map $\mcP^{\bot}:TQ \rightarrow \mcD^{\bot} $ onto the distribution $\mcD^{\bot}$ is constructed by the vector fields spanned by \eqref{vf_distribution} and the metric \eqref{wheel_metric} and is given by 
\begin{align}
    \small \mcP^{\bot} = \mcI\begin{bmatrix}
        0& 0 & 0 & 0 \\
        0& \frac{J + mR^2\sin^2\theta}{mR^2} & -\frac{1}{2}\sin2\theta & -\frac{J}{mR}\cos\theta \\
        0& -\frac{1}{2}\sin2\theta & \frac{J + mR^2\cos^2\theta}{mR^2} & -\frac{J}{mR}\sin\theta \\
        0& -\frac{\cos\theta}{R} & -\frac{\sin\theta}{R}& 1
    \end{bmatrix}.
\end{align} The nonholonomic constraints may then be expressed by the equation $\mcP^{\bot}(\bv) = \bzero$, from which we find the constrained velocities $\bv^{\mcD} \in \mcD$ is given by 
\begin{align}
    \bv^\mcD = \begin{bmatrix}
        v_{\theta} & R\cos\theta v_{\varphi} & R\sin\theta v_{\varphi} & v_{\varphi}
    \end{bmatrix}^{T}
\end{align} in the coordinate induced frame $\left\{\bpp{}{\theta},\bpp{}{x},\bpp{}{y},\bpp{}{\varphi} \right\}$.

We construct the Rayleigh dissipation function according to equations \eqref{Rayleigh_force} using the nonholonomic constraint form \eqref{wheel_constriants}, from which the coefficient matrix is given by 
\begin{align}\label{rw_friction_matrix}
    \mbF\mcR^{\sharp_{\mbG}} = \begin{bmatrix}
        0& 0& 0 & 0 \\
        0 & \frac{\mu}{m} & 0 & -\frac{\mu R}{m}\cos\theta \\
        0 & 0 & \frac{\mu}{m} & -\frac{\mu R}{m}\sin\theta\\
        0 & -\frac{\mu R}{J}\cos\theta& -\frac{\mu R}{J}\sin\theta& \frac{\mu R^2}{J}
    \end{bmatrix}
\end{align} where the units of the friction coefficients $\mu$ are $[{kg}/{s} ]$, and the corresponding accelerations given by $\bF_{\eps} = -\frac{1}{\eps}\mbF\mcR^{\sharp_{\mbG}}\bv$. Defining a frame $\Phi$ as the direct sum of the vector fields \eqref{vf_distribution} and \eqref{vf_distribution_perp} we find the matrix $\mbQ_{\bq}$ according to Remark \ref{Q_remark} is given by 
\begin{align}
    \small\mbQ_{\bq} = \begin{bmatrix}
        0 & 0 & 0 & 0\\
        0 & J^2 + \gamma m R^2 \sin^2\theta & -\frac{1}{2}\gamma mR^2\sin2\theta & J^2R\cos\theta \\
        0 &  -\frac{1}{2}\gamma mR^2\sin2\theta & J^2 + \gamma m R^2 \cos^2\theta & - J^2 R\sin\theta \\
        0& -mRJ\cos\theta & -mRJ\sin\theta & mR^2J
    \end{bmatrix}
\end{align} where $\gamma = mR^2 + 2J$. One can check that $\mbQ_{\bq}$ satisfies Lemma \ref{prop_first_isomorphism}. As the Christoffel symbols are all zero, the equations of motion are simply given by 

\begin{align}\label{vd_friction_eom:eq1}
    &\dd{}{t}\begin{bmatrix}
        v_{\theta} \\ v_{x}\\ v_{y}\\ v_{\varphi}
    \end{bmatrix}  = -\frac{1}{\eps}\begin{bmatrix}
        0 \\ \frac{\mu}{m}v_{v} - \frac{\mu \rho}{m}\cos\theta v_{\varphi}\\
        \frac{\mu}{m}v_{y} - \frac{\mu \rho}{m}\sin\theta v_{\varphi}\\ 
        -\frac{\mu\rho}{J}\cos\theta v_{x} - \frac{\mu\rho}{J}\sin\theta v_{y} + \frac{\mu\rho^2}{J}v_{\varphi}
    \end{bmatrix}, \quad \\  \label{vd_friction_eom:eq2}
    & \dd{}{t}\begin{bmatrix}
        \theta \\ x \\ y \\ \varphi
    \end{bmatrix} = \begin{bmatrix}
        v_{\theta} \\ v_{x}\\ v_{y}\\ v_{\varphi}
    \end{bmatrix}.
\end{align}
\begin{remark}
    Note that the dynamics of the orientation in the plane $\theta$ is of the form $\dot{v}_{\theta} = 0$ as $v_\theta \in \mcD$ and $\mcD = \ker\left(\mbF\mcR^{\sharp_{\mbG}}\right)$, by construction. Hence, the orientation in the plane is invariant for the dynamics of the vertical rolling disk. 
\end{remark}

\subsection{Approximations of Slip Velocities} The first order approximation of the slip velocities are calculated according to  \eqref{first_order_slip} and are found as 
\begin{align} \label{wheelslip_fo_approx}
    \bh^{(1)} = \mbQ_{\bq}\circ(\Conn{}{\bv^{\mcD}}{\mcP^{\bot}})(\bv^{\mcD}) =  \begin{bmatrix}
    0 \\ \frac{mR}{\mu}\sin\theta v_{\theta}v_{\varphi}\\-\frac{mR}{\mu}\cos\theta v_{\theta}v_{\varphi}\\ 0 
    \end{bmatrix}
\end{align} Further, the second order approximation of the slip velocities is calculated according to \eqref{second_order_slip} using the results of Remark \ref{covariant_remark} and is found to be 
\begin{align}  \label{wheelslip_so_approx} \notag
    \bh^{(2)} &=     \mbQ_{\bq}\circ \left[(\Conn{}{\bh^{(1)}}{\mcP^{\bot}})(\bv^{\mcD}) + (\Conn{}{\bv^{\mcD}}{\mcP^{\bot}})(\bh^{(1)})\right] \\ \notag
    &\quad \quad \quad -\mbQ_{\bq}\circ \left[D^{\mcV}\bh^{(1)}\left((\Conn{}{\bv^{\mcD}}{\mcP})(\bv^{\mcD}) \right) \right] - \mbQ_{\bq}\circ \left[\Conn{\mcH}{\bv^{\mcD}}{\bh^{(1)}}(\bv^{\mcD}) \right] \\ 
    &=\frac{m^2RJ^2}{\mu^2(J + mR^2)^2}\begin{bmatrix}
        0 \\ \cos\theta v_{\theta}^2 v_{\varphi}\\ \sin\theta v_{\theta}^2 v_{\varphi} \\ -\frac{mR}{J}v_{\theta}^2 v_{\varphi}
    \end{bmatrix}
\end{align}

\subsection{Approximations of Slip Dynamics} Using the first order approximation of the slip velocities \eqref{wheelslip_fo_approx}, we find the zeroth order approximation of \eqref{vd_friction_eom:eq1}-\eqref{vd_friction_eom:eq2} using equations \eqref{zeroth_eom} are given by 
\begin{align}\label{vd_zeroth_eom:eq1}
    &\dd{}{t}\begin{bmatrix}
        v_{\theta} \\ v_{x}\\ v_{y}\\ v_{\varphi}
    \end{bmatrix} = \begin{bmatrix}
        0 \\ -R\sin\theta v_{\theta} v_{\varphi}\\ R\cos\theta v_{\theta} v_{\phi}\\ 0
    \end{bmatrix} +O(\eps), \\  \label{vd_zeroth_eom:eq2}
    &\dd{}{t}\begin{bmatrix} 
        \theta \\ x \\ y \\ \varphi
    \end{bmatrix} = \begin{bmatrix} 
        v_{\theta} \\ R\cos\theta v_{\varphi} \\ R\sin\theta v_{\phi} \\ v_{\varphi}
    \end{bmatrix} +O(\eps). 
\end{align} 

\begin{remark}
    The zeroth order equations of motion \eqref{vd_zeroth_eom:eq1}-\eqref{vd_zeroth_eom:eq2} are the same equations of motion resulting from Lagrange-d'Alembert's variational principle \cite[Sec. 1.4]{bloch2004nonholonomic}. 
\end{remark}
Moreover, using the second order approximation of the slip velocities \eqref{wheelslip_fo_approx} and \eqref{wheelslip_so_approx}, we find the first order equations \eqref{first_eom} are given by 
\begin{align}
    \small &\dd{}{t}\begin{bmatrix}
        v_{\theta} \\ v_{x}\\ v_{y}\\ v_{\varphi}
    \end{bmatrix} = \begin{bmatrix}
        0 \\ -R\sin\theta v_{\theta} v_{\varphi}\\ R\cos\theta v_{\theta} v_{\varphi}\\ 0
    \end{bmatrix} +\eps \small{\frac{mRJ}{\mu(J+mR^2)} }\begin{bmatrix}
        0 \\ -\cos\theta v_{\varphi}v_{\theta}^{2}\\-\sin\theta v_{\varphi}v_{\theta}^{2} \\ \frac{mR}{J} v_{\varphi}v_{\theta}^{2}
    \end{bmatrix}+O(\eps^2),\\
    \small &\dd{}{t}\begin{bmatrix}
        \theta \\ x \\ y \\ \varphi
    \end{bmatrix} = \begin{bmatrix}
        v_{\theta} \\ R\cos\theta v_{\varphi} \\ R\sin\theta v_{\varphi}\\ v_{\varphi}
    \end{bmatrix} + \eps \frac{mR}{\mu}\begin{bmatrix}
    0 \\ \sin\theta v_{\theta}v_{\varphi}\\-\cos\theta v_{\theta}v_{\varphi}\\ 0 
    \end{bmatrix} +O(\eps^2).
\end{align}

\section{Conclusion}
In this paper, we developed an affine connection approach for realizing nonholonomic mechanical systems. We formulated the slow/fast dynamics in a coordinate-independent manner in terms of sections, projection maps and covariant derivatives. We proposed a novel invariance condition for slow/fast realizations of nonholonomic constraints and proved that this novel condition is equivalent to the classical approach. We proposed a recursive procedure to approximate higher order slip velocities and their associated dynamics. We conclude the paper with some remarks regarding the advantages of the proposed formalism. 
\begin{enumerate}
    \item The geometric formalism is coordinate-independent and suitable for symbolic computation.
    \item The proposed slow-fast decomposition permits a recursive procedure to approximate the invariant manifold representing slip velocities. In the context of vehicle systems, the slow manifold may be physically interpreted as a deviation from the ideal no-slip constraints. Approximations of the slow manifold provide estimates of these deviations and have applications in model-based slip estimation and localization in vehicle systems. 
    \item Higher order approximations of slip velocities are based on objects that can be interpreted geometrically and may give insight into the geometric structure of other equivalent approaches. 
\end{enumerate} The future directions of this research include the following: (i) Designing a recursive controller robust to the uncertainties in the estimation of the friction coefficients and external forces; (ii) Use of higher-order approximations of the slow-manifold for in-the-loop motion estimation of vehicle systems; and (iii) symmetry reduction for slow-fast realizations of nonholonomic systems. 
\\

% \textbf{Data availability:} The datasets generated and
% analyzed during the current study are available upon request from the corresponding author.

\textbf{Conflict of Interest:}  The authors declare that they have no conflict of interest.

\begin{appendices}

\section{Proofs of Statements}

\subsection{Proof of Proposition \ref{prop_projection}} \label{proof_prop_projection}
\begin{proof} 
    For the horizontal space  
\begin{align} \notag
    \mcP_{\mcH}\left(\bq,\bv_{\bq},\bU,\bW \right) &= \left(\mcQ_{\Gamma}^{\mcH}\left(\bq,\bv_{\bq},\bU,\bW \right)\right)^{\down{\hor}}\\ \notag
    &= \left(\bq,\bv_{\bq},\bU,- \Gamma(\bq,\bv_{\bq})\bU \right)^{\down{\hor}}\\ \notag
    &= \left(\bq,\bU \right) = \left(\pi_{TQ} \right)_{*}(\bq,\bv_{\bq},\bU,\bW). \notag
\end{align} Therefore we have $\mcP_{\mcH} = \left(\pi_{TQ} \right)_{*}$. 

For the vertical space, let $\bX:Q \rightarrow TQ$ be a section:
\begin{align} \notag
    \mcP_{\mcV}(\bX_{*}) &= \left(\mcQ^{\mcV}_{\Gamma}\left(\bq,\bX_{\bq},\bv_{\bq},D\bX(\bq)\cdot \bv_{\bq} \right) \right)^\down{\ver}\\ \notag
    &= \left(\bq,\bX_{\bq},\bzero, D\bX(\bq)\cdot \bv_{\bq} + \Gamma(\bq,\bX_{\bq})\cdot \bv_{\bq} \right)^\down{\ver}\\ \notag
    &= \left(\bq,D\bX(\bq)\cdot \bv_{\bq} + \Gamma(\bq,\bX_{\bq})\cdot \bv_{\bq} \right)
    = \Conn{}{\bv}{\bX}. \notag
\end{align}

\end{proof}

\subsection{Proof of Proposition \ref{prop_bundle_map}} \label{proof_prop_bundle_map}

% \begin{center}
% % https://q.uiver.app/?q=WzAsMyxbMCwwLCJUUSJdLFsyLDAsIlRRIl0sWzEsMSwiUSJdLFswLDEsImgiXSxbMSwyLCJcXHBpX3tUUX0iXSxbMCwyLCJcXHBpX3tUUX0iLDJdXQ==
% \[\begin{tikzcd}
% 	TQ && TQ \\
% 	& Q
% 	\arrow["h", from=1-1, to=1-3]
% 	\arrow["{\pi_{TQ}}", from=1-3, to=2-2]
% 	\arrow["{\pi_{TQ}}"', from=1-1, to=2-2]
% \end{tikzcd}\]
% \end{center} 
\begin{proof}
Let $\bh:TQ \rightarrow TQ$ be a vector bundle map, i.e., $\pi_{TQ}\circ \bh = \pi_{TQ}$. Let $\bw,\,\bv:Q\rightarrow TQ$ be sections, then $\bh\circ\bw : Q \rightarrow TQ$ is also a section. The push-forward of this section is a map that satisfies the following commutative diagram: 
\begin{center}
% https://q.uiver.app/?q=WzAsNSxbMCwxLCJUUSJdLFsyLDEsIlRRIl0sWzEsMiwiUSJdLFswLDAsIlRUUSJdLFsyLDAsIlRUUSJdLFswLDEsImgiXSxbMSwyLCJcXHBpX3tUUX0iXSxbMCwyLCJcXHBpX3tUUX0iLDJdLFsyLDAsIlxcYm9sZHN5bWJvbHtYfSIsMCx7ImN1cnZlIjotNH1dLFs0LDEsIlxccGlfe1RUUX0iXSxbMyw0LCJoX3sqfSJdLFswLDMsIlxcYm9sZHN5bWJvbHtYfV97Kn0iLDAseyJjdXJ2ZSI6LTJ9XSxbMywwLCJcXHBpX3tUVFF9IiwwLHsiY3VydmUiOi0xfV1d
\[\begin{tikzcd}
	TTQ && TTQ \\
	TQ && TQ \\
	& Q
	\arrow["h", from=2-1, to=2-3]
	\arrow["{\pi_{TQ}}", from=2-3, to=3-2]
	\arrow["{\pi_{TQ}}"', from=2-1, to=3-2]
	\arrow["{\boldsymbol{w}}", curve={height=-24pt}, from=3-2, to=2-1]
	\arrow["{\pi_{TTQ}}", from=1-3, to=2-3]
	\arrow["{h_{*}}", from=1-1, to=1-3]
	\arrow["{\boldsymbol{w}_{*}}", curve={height=-12pt}, from=2-1, to=1-1]
	\arrow["{\pi_{TTQ}}", curve={height=-6pt}, from=1-1, to=2-1]
\end{tikzcd}\]
\end{center} By virtue of Proposition \eqref{prop_projection}, we calculate the covariant derivative along $\bv$ as follows
\begin{align} \notag
    \Conn{}{\bv}{(\bh\circ \bw)} &= \mcP_{\mcV}\left(D(\bh\circ \bw) \right) = \mcP_{\mcV}\left(D\bh\cdot D\bw \right)\\ \notag
    &= \mcP_{\mcV}\left(D\bh\cdot (\mcP_{\mcH}\oplus\mcP_{\mcV})(D\bw)  \right)\\ \notag
    &= \mcP_{\mcV}\left(D\bh\cdot \left(\mcP_{\mcH}(D\bw)+\mcP_{\mcV}(D\bw)  \right)\right)\\ \notag
    \small&= \mcP_{\mcV}\left(D\bh\cdot\left(\mcP_{\mcH}(D\bw)\oplus\bzero \right) \right) + \mcP_{\mcV}\left(D\bh\cdot(\bzero\oplus\Conn{}{\bv}{\bw}) \right), \notag
\end{align} where we used the direct product isomorphism, twice and $D\bh$ is the Jacobian of $\bh$. In the decomposition $TTQ \cong  \mcH TQ\oplus \mcV TQ$, consider the Jacobian of the map $\bh$:
\begin{align}
    D\bh = \begin{bmatrix}\mcP_{\mcH}D\bh\big|_{\mcH}& \mcP_{\mcH}D\bh\big|_{\mcV}\\
    \mcP_{\mcV}D\bh\big|_{\mcH}& \mcP_{\mcV}D\bh\big|_{\mcV}\end{bmatrix}.
\end{align} Since $\mcP_{\mcH} = \left(\pi_{TQ}\right)_{*}$,  when restrict $D\bh$ to the vertical space $\mcV TQ = \ker(\pi_{TQ})_{*}$ we find 
\begin{align}
    \pi_{\mcH}D\bh\big|_{\mcV} = \left(\pi_{TQ}\right)_{*}\bh_{*}\big|_{\mcV}  = \left(\pi_{TQ}\right)_{*}\big|_{\mcV}  = \bzero.
\end{align} Let $\bz:[a,b] \rightarrow TQ$ be a horizontal curve along $\bq(t)$ in $Q$. Then, $\dot{\bz} \in T_{\bw(t)}TQ \cong \mcH_{\bz(t)}TQ\oplus \mcV_{\bz(t)}TQ $ satisfying $$\dot{\bz}  = \dot{\bq}^{i}\left(\bpp{}{\tl{q}^{i}} - \Gamma^{k}_{ij}v^{j}\bpp{}{v^{i}} \right).$$ Since $\bh(\bz(t)) \in T_{\bq(t)}Q$ is a curve, we calculate derivative 
\begin{align} \notag
    \dot{\bq}(t) &= \dd{}{t}\left(\pi_{TQ}\circ \bh\circ \bz(t)  \right) = (\pi_{TQ})_{*}\, D\bh\, \dot{\bz} = \mcP_{\mcH}(D\bh\, \dot{\bz}).\notag
\end{align} Therefore, we find that $\mcP_{\mcH}D\bh\big|_{\mcH}$ is the identity map $\text{id}$. 

Define the vertical component of the Jacobian by $D^{\mcV}\bh:= \mcP_{\mcV}\left(D\bh \right)|_{\mcV}$ and the horizontal covariant derivative $\nabla^{\mcH}\bh:= \mcP_{\mcV}D\bh\big|_{\mcH}$. 
Therefore, the covariant derivative of the bundle map $\bh$ is given by 
\begin{align} \notag
    \Conn{}{\bv}{(\bh\circ \bw)} = \Conn{\mcH}{\bv}{\bh(\bw)} + D^{\mcV}\bh(\Conn{}{\bv}{\bw}).
\end{align}
We calculate local coordinate expressions for the covariant derivative as follows:
\begin{align} \notag
    \Conn{}{\bv}{\left(\bh\circ \bw \right)} &= v^i\Conn{}{\bpp{}{q^i}}{\left(\left(\bh\circ \bw\right)^j\bpp{}{q^j}\right)}\\ \notag
    &= v^i \left(\mcL_{\bpp{}{q^i}}(\bh(\bq,\bw)^j)\bpp{}{q^j} + \bh(\bq,\bw)^j\Conn{}{\bpp{}{q^i}}{\bpp{}{q^j}} \right)\\ \notag
    \small&= v^i \left(\pp{\bh^j}{q^i}\bpp{}{q^j} + \pp{\bh^j}{w^k}\pp{w^k}{q^i}\bpp{}{q^j} + \bh^i\Gamma^k_{ij}\bpp{}{q^k} \right). \notag
\end{align} On the other hand we have that 
\begin{align}
    \Conn{}{\bpp{}{q^i}}{\bw} = \left(\pp{w^k}{q^i} + \Gamma_{ij}w^j\right)\bpp{}{q^k}. \notag
\end{align} Putting this expression into the above formula we find 
\begin{align} \notag
    \Conn{}{\bv}{\left(\bh\circ \bw \right)} &= \pp{\bh^m}{w^l}\left(\Conn{}{\bv}{\bw} \right)^l \bpp{}{q^m} + \left(\pp{\bh^m}{q^k}v^k  + \Gamma^{m}_{ki}v^{k}\bh^{i} - \Gamma^{l}_{ki}w^{k}v^{i}\pp{\bh^m}{w^l} \right)\bpp{}{q^m}\\ \notag 
    &=  D^{\mcV}\bh(\Conn{}{\bv}{\bw}) + \Conn{\mcH}{\bv}{\bh(\bw)}. \notag
\end{align} 
\end{proof}

\end{appendices}

%%===========================================================================================%%
%% If you are submitting to one of the Nature Portfolio journals, using the eJP submission   %%
%% system, please include the references within the manuscript file itself. You may do this  %%
%% by copying the reference list from your .bbl file, paste it into the main manuscript .tex %%
%% file, and delete the associated \verb+\bibliography+ commands.                            %%
%%===========================================================================================%%

\bibliography{sn-article}% common bib file
%% if required, the content of .bbl file can be included here once bbl is generated
%%\input sn-article.bbl

\end{document}